\documentclass[final]{siamltex}


\usepackage[colorlinks=true,urlcolor=blue]{hyperref}
\usepackage{amstext}
\usepackage{amssymb}
\usepackage{amsmath}
\usepackage{mathrsfs}
\usepackage{array}
\usepackage{enumerate}
\usepackage{url}
\usepackage{mathtools}
\usepackage{bm}

\usepackage[]{graphicx}               
\usepackage{epstopdf}
\usepackage{color}
\usepackage{appendix}
\usepackage[section]{placeins}
\usepackage[section]{algorithm}

\def\multiset#1#2{\ensuremath{\left(\kern-.3em\left(\genfrac{}{}{0pt}{}{#1}{#2}\right)\kern-.3em\right)}}

\newtheorem{remark}{Remark}

\def\<{{\langle}} 
\def\>{{\rangle}} 

\def\p{{\partial}} 

\def\dd#1{\displaystyle{#1}} 

\def\bfC{{\mathbf{C}}} 
\def\bfv{{\mathbf{v}}} 

\def\bm#1{\mathbf{#1}} 

\def\8{{\infty}} 

\usepackage{verbatim}  

\title{On Curvature Driven Rotational Diffusion of Protein on Membrane Surface
}


\author{Y.~C.~Zhou\thanks{Department of Mathematics,
        Colorado State University, Fort Collins, Colorado, 80523-1874
        ({\tt yzhou@math.colostate.edu}).}}

\begin{document}

\maketitle

\begin{abstract}
Morphological dynamics of bilayer membrane is intrinsically coupled to the translational and orientational localization of membrane
proteins. In this paper we are concerned with the orientational localization of membrane proteins in the absence of protein 
interaction and correlation. Entropic energy depending on the angular distribution function and the curvature energy depending 
on the principal curvature vectors are introduced to assemble an energy functional for the coupled system. 
Application of the Onsager's variational principle gives rise to a generalized Smoluchowskii equation 
governing the temporal and angular variations of the protein orientation. We prove the existence of the stationary solution of the equation 
as fixed points of a continuous nonlinear nonlocal map, and for biologically relevant conditions we obtain the uniqueness of the
solution. To approximate the stationary solution in the Fourier space we construct an efficient numerical method that reduces the 
expansion and relates the coefficients to the modified Bessel functions of the first kind. Existence and uniqueness of the numerical solution 
are justified for biologically relevant conditions.
\end{abstract}

\begin{keywords} 
protein localization; curvature vector; energy functional; stationary solution; existence and uniqueness; Bessel functions
\end{keywords}

\begin{AMS}
35A15, 35K15, 35Q99, 60J60
\end{AMS}

\pagestyle{myheadings}
\thispagestyle{plain}
\markboth{Y. C. ZHOU}{Surface Rotational Diffusion of Proteins}

\section{Introduction}
When proteins are attached to or embedded in a bilayer membrane, specific curvature will be induced in the membrane whose 
magnitude depends on the protein structure, the membrane composition, and the solvation environment, among others\cite{JarschI2016a}. 
While quantitative determination of this dependence using the first principle still remains a grand challenge for biophysicists 
and mathematicians, geometrical characterization is possible by relating the membrane curvature induced by a single protein to the
membrane composition in specific solvation environment and applying this {\it intrinsic curvature} of the protein
to a distribution of proteins in membrane. The {\it intrinsic mean curvature} of the protein is recently defined 
and used along with the intrinsic mean curvature of lipids to define the intrinsic mean curvature of the lipid-protein complex
with varying lipid composition and protein coverage \cite{ZhouY2017b}. When the membrane curvature around a protein is 
different from protein intrinsic curvature, the membrane will deform and the protein will displaced, leading to a
dynamic coupling between membrane morphology and protein localization. The dominating modes of the protein displacement 
are translational diffusion and rotational diffusion.

Here we are concerned with the lateral rotational diffusion of protein, i.e., the rotation of proteins in the tangential plane
of the membrane surface $S$. This is the dominant mode of the rotational motion of proteins in membrane, as the rotation away from the 
normal direction of the membrane usually hindered by larger energy barriers \cite{DasB2015a}, c.f. Fig. \ref{fig:model}(left). 
It is then sufficient to use a two-dimensional vector $\bm{u}$, or equivalently, a polar angle $\theta$, to characterize the 
orientation of proteins on the tangent 
plane of the membrane surface. Assuming the surface distribution of membrane protein is sparse, we can neglect the interaction and 
correlation among proteins to define an angular distribution function $\psi(\theta,t)$ of proteins independently at each point on the 
surface $S$. The probability of finding the angular coordinate of proteins at time $t$ within the angular 
interval $[\theta, \theta+d\theta]$ is $\psi(\theta;t) d \theta$. Corresponding to the orientational change of membrane proteins
we will use the principal curvature vector of the membrane rather than the generally used mean curvature and Gaussian curvature, 
for they can not characterize the morphological change of the membrane due to the rotation of proteins. Intrinsic principal curvature 
vectors of the lipids and proteins are also introduced, which along with the angular distribution of the proteins define the 
intrinsic principal curvature vector of the lipid-protein complex. The curvature energy of the bilayer membrane is defined to be a 
function of the mismatch between the intrinsic principal curvature vector of the lipid-protein complex and the principal curvature 
vector of the membrane. This curvature energy and the entropic energy for the protein orientation constitute the energy functional 
for the angular distribution of proteins modulated by the membrane curvature. The governing equation for the rotational diffusion 
of the protein is derived by applying the Onsager variational principle to this energy functional. The relatively short time scale of 
rotational diffusion compared to the membrane morphological change allows us to seek the stationary solution of
the angular distribution for a specified membrane curvature. This stationary solution appears as the
fixed point of a bounded nonlocal map on a convex compact set of continuous periodic functions, which permits the 
justification of the existence and uniqueness of the fixed point for biological relevant conditions. Numerical 
approximation of the fixed point shows that it can be given by even order of cosine series, and the expansion 
coefficients are related to the modified Bessel functions of the first kind. 

This paper is organized as follows. In Section \ref{sect:model} we defined the energy functional for protein
angular distribution driven by the mismatch of the principal curvature vectors. The governing equation
for the variation of angular distribution is derived as the gradient flow of this energy functional.
The stationary solution of the equation is analyzed in Section \ref{sect:analysis}, where the Schauder fixed point 
theorem is employed to proved the existence. For biological relevant conditions we can further bound the 
Frechet derivative of the fixed point map to obtain the uniqueness. Fourier series approximation of the
fixed point is presented in Section \ref{sect:numerical}, where an efficient numerical procedure is formulated
thanks to the representation of the expansion coefficients as functions of modified Bessel functions of the
first kind. 

\section{Variational Modeling of Protein Orientation Coupled to Membrane Curvature} \label{sect:model}
\subsection{Energy Functional for Rotational Protein on Membrane Surface}
We model the angular distribution $\psi(\theta)$ of the protein following the gradient flow of an energy functional consisting of two components: 
\begin{equation} \label{eqn:eng_tot}
E(\psi) = E_{ent}(\psi) + E_{cur}(\psi).
\end{equation}
The first component, the entropic energy $E_{ent}$, is given by the classical grand potential functional of the one-particle 
distribution \cite{OnsagerL1949a,DoiM2011a}
\begin{equation} \label{eqn:eng_Eent}
E_{ent}(\psi) =  \int_0^{2 \pi} k_BT \ln \left( \frac{\psi}{\psi_{eq}} \right) \psi d \theta,
\end{equation} 
where $\psi_{eq}$ is the equilibrium distribution of $\psi$ when the entropic energy $E(\psi)$ is minimized
in the absence of external potential. The other component shall be the energy due to mismatch between the membrane curvature 
and protein intrinsic curvature. Instead of using mean curvature or Gaussian curvature, nevertheless, we will use the principal curvatures to 
characterize the curvature energy, for the reasons that (i) mean curvature or Gaussian curvature individually can not characterize the 
orientational dependence of the curvature energy of the interacted protein-membrane system; 
(ii) single mean curvature or Gaussian curvature will fail to describe the bending nature of membrane for 
some of the most interesting membrane morphology. For instances, the mean curvature at some position on the neck of a budding vesicle 
can be zero, and the Gaussian curvature of a tubule is also zero. Fig. \ref{fig:model} illustrates the protein embedding in membrane
and the intrinsic principal curvature vector of a curvature generation protein. It is apparent that the mean curvature and
Gaussian curvature will not change as the protein is rotating laterally in membrane. A curvature vector with the associated 
principal directions is therefore necessary for the geometrical characterization of the lipid-protein complex.
\begin{figure}[!ht]
\begin{center}
\includegraphics[height=2.5cm]{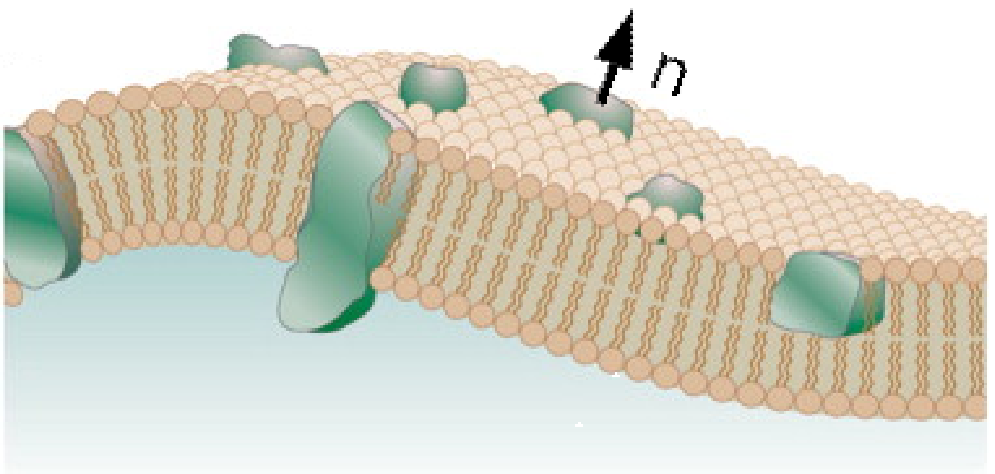}
\includegraphics[height=2.5cm]{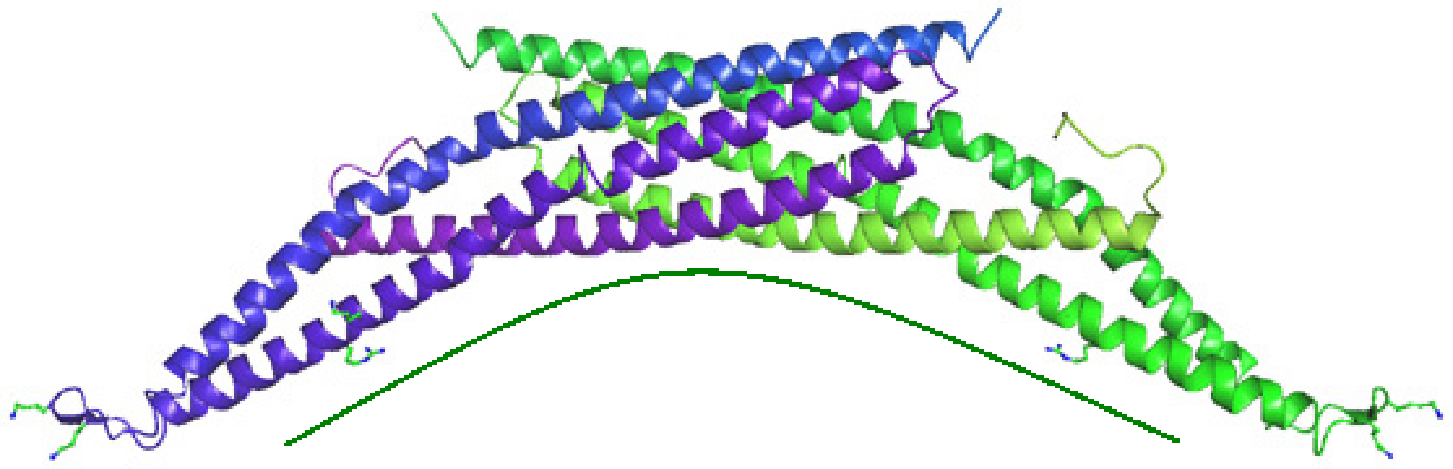}
\caption{Left: Schematic illustration of proteins embedding in a bilayer membrane. Rotational diffusion in the tangent plane with normal direction $n$ 
is energetically more favorable than the rotation away from the direction $n$. Right: A curvature generation protein has an intrinsic curvature vector
$(0,\kappa)$ whose principal direction changes as protein rotates, a nature that can not be characterized by mean curvature or Gaussian curvature only.} 
\label{fig:model}
\end{center}
\end{figure}

The energy $E_{cur}$ we adopt for this second component is the bending energy density of membrane that depends on the angular
distribution of the proteins. This dependence is introduced by considering the intrinsic principal curvature vectors 
$$\bfC^l = (C_1^l,C_2^l), \quad \bfC^p = (C_1^p,C_2^p)$$ 
of the lipids ($l$) and proteins ($p$). The associated principal directions are
$$\bfv^l=(v_1^l, v_2^l), \quad \bfv^p=(v_1^p,v_2^p),$$
where $v_1^l, v_1^p$ are the respective directions of minimum curvatures. 
The principal curvature vector of the membrane and intrinsic principal curvature vector of lipid-protein complex 
are not always aligned, and their mismatch can be sensed and corrected through orientational change of proteins, 
such that
\begin{equation} \label{eqn:eng_Ecur}
E_{cur}(\psi) = \frac{1}{2} K \left( (C_1 - C^0_1(\psi))^2 + (C_2 - C^0_2(\psi))^2 \right),
\end{equation}
is minimized. Here $K$ and $\bfC = (C_1,C_2)$ are the bending modulus and local principal curvature vector of the bilayer membrane,
and $(C_1^0,C_2^0)$ is the local intrinsic principal curvature vector of the lipid-protein complex. In Eq.(\ref{eqn:eng_Ecur}) we use a 
single curvature energy constant $K$ for the two principal curvatures by assuming 
(i) the bilayer membrane structure and its mechanical proprieties are isotropic; and (ii) the asymmetric principal
curvature is the only contribution of the anisotropic nature of the membrane bending energy. 

Despite their overall geometrical configuration (cylindrical or conical), the lipid molecules have much smaller radii than membrane proteins, 
rendering a much larger rotational diffusion coefficients than membrane proteins. We then assume that the principal directions of bilayer membrane are 
always aligned with the intrinsic principal directions of lipids, regardless of their species. Only the membrane proteins might have such 
an orientational mismatch. We further assert that $C_1^p \ne C_2^p$ which holds true for proteins with realistic 
asymmetric molecular structures. It follows that $v_1^p \perp v_2^p$ \cite{Wolfgang_DiffGeo}, 
and thus we can use $(v_1^p,v_2^p)$ as a basis to represent the principal direction $v_1$ of the membrane
\begin{equation}
v_1 = a v_1^p + b v_2^p, 
\end{equation}
where 
\begin{eqnarray}
a &  = & v_1 \cdot v_1^p = \cos(\theta_m - \theta), \\
b &  = & v_1 \cdot v_2^p = \sin(\theta_m - \theta), 
\end{eqnarray}
with $\theta_m,\theta$ being the polar angles respectively corresponding to the principal directions $v_1,v_1^p$ on the local tangent plane of $\mathcal{S}$.
The Euler's theorem asserts that the intrinsic normal curvatures of the proteins in the direction of $v_1$ and $v_2$ are respectively 
given by $ a^2 C_1^p + b^2 C_2^p$ and $b^2 C_1^p + a^2 C_2^p$ \cite{TreatisesDiffGeom}. The local intrinsic curvatures to be established by 
the lipid-protein complex are then
\begin{eqnarray}
C_1^0 & = &  C_1^l f_l + \int_{0}^{2 \pi}(a^2 C_1^p + b^2 C_2^p) f_p d \theta, \label{eqn:c10_final} \\
C_2^0 & = &  C_2^l f_l + \int_{0}^{2 \pi}(b^2 C_1^p + a^2 C_2^p) f_p d \theta. \label{eqn:c20_final}
\end{eqnarray}
Here $0< f_l,f_p<1$ are the surface coverage fractions of lipids and proteins accounting for the number densities $\rho_l,\rho_p$
of lipids and proteins, defined by 
\begin{equation}
f_l = a_l^2 \rho_l/I, \qquad
f_p = a_p^2 \rho_p \psi/I,
\end{equation}
where $a_l,a_p$ are respectively the effective radii of lipids and proteins, and 
\begin{equation} \label{eqn:I}
I = a_l^2 \rho_l + \int_{0}^{2 \pi} a_p^2 \rho_p \psi d \theta = a_l^2 \rho_l + a_p^2 \rho_p = 1, 
\end{equation}
assuming a full coverage of mixed lipids and proteins at an arbitrary position on the membrane surface. 
The curvature energy $E_{cur}$ is then rendered as a function of the protein angular distribution function $\psi$:
\begin{align}
E_{cur}(\psi)  
   = & \frac{1}{2} K \left( d_1 - m \int_{0}^{2 \pi} \alpha_1 \psi d \theta \right)^2 
   +  \frac{1}{2} K \left ( d_2 - m \int_{0}^{2 \pi} \alpha_2 \psi d \theta \right )^2,
\end{align}
where 
\begin{eqnarray*}
d_1 & = & C_1 - C_1^l a_l^2 \rho_l, \\
d_2 & = & C_2 - C_2^l a_l^2 \rho_l, \\
m & = & a_p^2 \rho_p, \\
\alpha_1 & = & a^2 C_1^p + b^2 C_2^p, \\
\alpha_2 & = & b^2 C_2^p + a^2 C_1^p.
\end{eqnarray*}
\subsection{Variational Modeling of Rotational Diffusion}
The time derivative of $E(\psi)$ is
\begin{align}
\dot{E}(\psi)  = & \int_0^{2 \pi} k_BT \left [ \ln \left( \frac{\psi}{\psi_{eq}} \right) + 1 \right] 
 \dot{\psi} d \theta
   - m K \sum_{i=1}^2 \left( d_i - m \int_{0}^{2 \pi} \alpha_i \psi d \theta \right) \int_{0}^{2 \pi} \alpha_i \dot{\psi} d \theta.
\end{align} 
Recall the general conservation equation for the transportation of $\psi$:
\begin{equation} \label{eqn:conserv_1}
\frac{D \psi}{Dt} = \dot{\psi} + \frac{\partial}{\partial \theta} (\dot{\theta} \psi) = 0,
\end{equation}
we will have
\begin{align}
\dot{E}(\psi)  = & -\int_0^{2 \pi} k_BT \left [ \ln \left( \frac{\psi}{\psi_{eq}} \right) + 1 \right ]  \dot{\theta} \frac{\p \psi}{\p \theta} d \theta ~ + \nonumber \\
   &  m K \sum_{i=1}^2 \left( d_i - m \int_{0}^{2 \pi} \alpha_i \psi d \theta \right) \int_{0}^{2 \pi} \alpha_i \dot{\theta} \frac{\p \psi}{\p \theta} d \theta  \nonumber \\
 = & \int_0^{2 \pi} k_BT \dot{\theta} \psi \frac{\partial}{\partial \theta} \left [ \ln \left( \frac{\psi}{\psi_{eq}} \right)  \right ] d \theta - \nonumber \\
  &  m K \sum_{i=1}^2 \left( d_i - m \int_{0}^{2 \pi} \alpha_i \psi d \theta \right) \int_{0}^{2 \pi} \alpha'_i \dot{\theta} \psi d \theta
\end{align} 
after integrating by parts and using the fact that $\psi,\alpha_i,\dot{\theta}$ are periodic, where 
$$\alpha'_i  = \frac{d \alpha_i}{d \theta}.$$
The total rate of change of the Hamiltonian of the lipid-protein complex can now be obtained by assembling
the rate of potential energy dissipation $\dot{E}$ and the rate of kinetic energy dissipation 
\begin{equation}
\Phi(\psi) = \frac{1}{2} \int_0^{2 \pi} \xi_r \dot{\theta}^2 \psi d \theta,
\end{equation}
giving 
\begin{equation}
R = \Phi(\psi) + \dot{E}(\psi). 
\end{equation}
This total dissipation rate will be maximized when 
\begin{equation}
\frac{\delta(\Phi + \dot{E})}{\delta \dot{\theta}} = 0,
\end{equation}
i.e.,
\begin{align}
\xi_r \dot{\theta} \psi + k_B T \psi \frac{\partial}{\partial \theta} \left [ \ln \left( \frac{\psi}{\psi_{eq}} \right) \right ] - 
m K \sum_{i=1}^2  \left( d_i - m \int_{0}^{2 \pi} \alpha_i \psi d \theta \right)  \alpha'_i \psi = 0,
\end{align}
indicating that 
\begin{equation}
\dot{\theta}  = - \frac{k_B T}{\xi_r \psi} \cdot \frac{\partial \psi}{\partial \theta} 
 + \frac{m K}{\xi_r} \sum_{i=1}^2 \left( d_i - m \int_{0}^{2 \pi} \alpha_i \psi d \theta \right)  \alpha'_i. 
\end{equation}
With this form of $\dot{\theta}$ the conservation equation (\ref{eqn:conserv_1}) now reads
\begin{align} 
\dot{\psi} & = \frac{\partial}{\partial \theta} \left [\frac{k_B T}{\xi_r} \cdot \frac{\partial \psi}{\partial \theta} 
- \frac{m K}{\xi_r} \sum_{i=1}^2  \left( d_i - m \int_{0}^{2 \pi} \alpha_i \psi d \theta \right)  \alpha'_i \psi \right ]. 
\label{eqn:conserv_2}
\end{align}
It is indeed an integral-differential equation for $\psi$: 
\begin{equation} \label{eqn:smol_rotdiff}
\frac{\partial \psi}{\partial t} = \frac{\partial}{\partial \theta} \left[  D_r \frac{\partial \psi}{\partial \theta} - D_r
\frac{m K}{k_BT} \sum_{i=1}^2 \left( d_i - m \int_{0}^{2 \pi} \alpha_i \psi d \theta \right)  \alpha'_i \psi 
 \right]. 
\end{equation}
where $D_r = k_BT /\xi_r$ is the rotational diffusion coefficient of proteins. Equation (\ref{eqn:smol_rotdiff}) is a 
Smoluchowski equation (aka Fokker-Planck equation) with a nonlocal potential 
$$-\frac{m K}{k_BT} \sum_{i=1}^2  \left( d_i - m \int_{0}^{2 \pi} \alpha_i \psi d \theta \right)  \alpha'_i$$
that depends on the mismatch of the principle curvatures of lipid membrane and proteins.

\section{Stationary Solution of the Rotational Diffusion} \label{sect:analysis}
The time scale of rotational diffusion of protein is smaller than those of the translational diffusion and of the membrane deformation, 
allowing us to seek the stationary solution of the time-dependent equation (\ref{eqn:smol_rotdiff}) for given membrane curvature 
and local protein number density. The steady state of Equation (\ref{eqn:smol_rotdiff}) reads
\begin{equation} \label{eqn:smol_rotdiff_steady}
\frac{d}{d \theta} \left[ D_r \frac{d \psi}{d \theta} - 
D_r C_E \sum_{i=1}^2  \left( d_i - m \int_{0}^{2 \pi} \alpha_i \psi d \theta \right)  \alpha'_i \psi 
 \right] = 0.
\end{equation}
where $C_E = mK/(k_BT)>0$. By introducing the Slotboom transformation
\begin{eqnarray*}
\tilde{D}_r &  = & D_r C_E e^{\sum_{i=1}^2 \left( d_i - m \int_{0}^{2 \pi} \alpha_i \psi d \theta \right)  \alpha_i }, \\
\tilde{\psi} & = & \psi e^{-C_E \sum_{i=1}^2 \left( d_i - m \int_{0}^{2 \pi} \alpha_i \psi d \theta \right)  \alpha_i },
\end{eqnarray*}
one can convert Equation (\ref{eqn:smol_rotdiff_steady}) to be
\begin{equation} \label{eqn:smol_rotdiff_steady_trans}
\frac{d}{d \theta} \left( \tilde{D}_r  \frac{d \tilde{\psi}}{d \theta} \right) = 0,
\end{equation} 
to which $\tilde{\psi} = N$ is a solution for some real constant $N$. It follows that
\begin{equation} \label{eqn:map_initial}
\psi = N e^{C_E \sum_{i=1}^2 \left( d_i - m \int_{0}^{2 \pi} \alpha_i \psi d \theta \right)  \alpha_i },
\end{equation}
and for $\psi$ to be a probability function in $[0, 2\pi]$ we must choose 
$$ \frac{1}{N} =  
\int_{0}^{2 \pi} e^{C_E \sum_{i=1}^2 \left( d_i - m \int_{0}^{2 \pi} \alpha_i \psi d \theta \right)  \alpha_i } d \theta.$$
Therefore a solution to Equation (\ref{eqn:smol_rotdiff_steady}) is a fixed point of the composite map 
\begin{equation} \label{eqn:map_F}
\mathcal{F}(\psi) = \mathcal{N} \circ \mathcal{E} (\psi),
\end{equation}
with
$$ \mathcal{E}(\psi) =  e^{C_E\sum_{i=1}^2  \left( d_i - m \int_{0}^{2 \pi} \alpha_i \psi d \theta \right)  \alpha_i }$$
and the normalization map 
$$ \mathcal{N}(\psi) = \psi \left( \int_{0}^{2 \pi} \psi d \theta \right)^{-1}.$$

\subsection{Existence of Stationary Solution}
We will establish the existence of a solution $\psi$ to the stationary equation (\ref{eqn:smol_rotdiff_steady}) in the admissible set
\begin{equation} \label{eqn:S_def}
S =\left \{ \psi \in C[0, 2 \pi]:  \psi \ge 0,   \psi(0) = \psi(2 \pi), \int_{0}^{2 \pi} \psi d \theta = 1,  |\psi'(\theta)| \le M \right \},
\end{equation}
with a norm of $C[0,2 \pi]$ being the Banach space
\begin{equation}
\| \psi \|_{\infty} = \max_{\theta \in [0, 2 \pi]} | \psi(\theta)|.
\end{equation}
Here the first three conditions are among the essential properties of $\psi$ as a probability function in $[0, 2\pi]$, and the last condition 
ensures that any sequence of functions in $S$ is equicontinuous. The positive constant $M$ is to be defined below.

\begin{lemma} \label{lemma:S_compact}
The set $S$ defined by (\ref{eqn:S_def}) is a compact, convex set in $C[0,2\pi]$.
\end{lemma}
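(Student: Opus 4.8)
The plan is to establish convexity directly from the defining constraints and to obtain compactness through the Arzelà--Ascoli theorem combined with a closedness argument.

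First I would verify convexity. Given $\psi_1,\psi_2\in S$ and $\lambda\in[0,1]$, set $\psi=\lambda\psi_1+(1-\lambda)\psi_2$. Continuity, nonnegativity, and the periodic condition $\psi(0)=\psi(2\pi)$ are each preserved because they are stable under nonnegative linear combination; the normalization $\int_0^{2\pi}\psi\,d\theta=\lambda+(1-\lambda)=1$ follows from linearity of the integral; and the triangle inequality gives $|\psi'|\le\lambda|\psi_1'|+(1-\lambda)|\psi_2'|\le M$. Hence $\psi\in S$, so $S$ is convex.

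For compactness, since $C[0,2\pi]$ with $\|\cdot\|_\infty$ is a complete metric space, it suffices to show that $S$ is closed, uniformly bounded, and equicontinuous, and then invoke Arzelà--Ascoli. Equicontinuity is immediate: the bound $|\psi'(\theta)|\le M$ forces every $\psi\in S$ to be $M$-Lipschitz, so $|\psi(\theta_1)-\psi(\theta_2)|\le M|\theta_1-\theta_2|$ uniformly in $\psi$. For uniform boundedness I would apply the mean value theorem for integrals to produce a point $\theta_*$ with $\psi(\theta_*)=\frac{1}{2\pi}\int_0^{2\pi}\psi\,d\theta=\frac{1}{2\pi}$, whence the Lipschitz bound yields $\|\psi\|_\infty\le\frac{1}{2\pi}+2\pi M$ for all $\psi\in S$.

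The remaining and most delicate step is closedness, specifically survival of the derivative constraint under uniform limits. If $\psi_n\to\psi$ in $\|\cdot\|_\infty$ with $\psi_n\in S$, then continuity, nonnegativity, periodicity, and (by uniform convergence permitting passage of the limit through the integral) the normalization all transfer to $\psi$ routinely. The obstacle is that uniform convergence does not control derivatives, so $\psi$ need not be differentiable a priori, and the set as literally written could fail to be closed at a limit with a corner. I would circumvent this by reformulating the constraint: the pointwise bound $|\psi_n'|\le M$ is equivalent to $M$-Lipschitz continuity, and this property passes to the uniform limit because $|\psi(\theta_1)-\psi(\theta_2)|=\lim_n|\psi_n(\theta_1)-\psi_n(\theta_2)|\le M|\theta_1-\theta_2|$. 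Since an $M$-Lipschitz function is differentiable almost everywhere with $|\psi'|\le M$ wherever the derivative exists, the limit lies in $S$ under this interpretation of the derivative bound. This closes $S$ and, with the boundedness and equicontinuity already in hand, completes the Arzelà--Ascoli argument.
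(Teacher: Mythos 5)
Your argument is correct and, at the top level, takes the same route as the paper: convexity is verified directly from the defining constraints, and compactness is obtained from the Arzel\`a--Ascoli theorem applied to the uniformly bounded, equicontinuous family $S$. The genuine difference is at the closedness step, and there you are more careful than the paper itself. The paper's proof extracts a uniformly convergent subsequence and checks that the limit is nonnegative (via the $\epsilon/2$ contradiction), periodic, and normalized, but it never verifies that the limit satisfies $|\psi'(\theta)| \le M$ --- and, read literally, that constraint does \emph{not} survive uniform limits: smooth functions with $|\psi_n'| \le M$ can converge uniformly to a function with a corner, so the set in (\ref{eqn:S_def}) with the derivative bound interpreted classically is not closed in $C[0,2\pi]$, and the paper's proof has a gap precisely where you concentrated your effort. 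Your reformulation of $|\psi'| \le M$ as $M$-Lipschitz continuity, which does pass to uniform limits and yields differentiability almost everywhere with $|\psi'| \le M$, is exactly the repair that makes the lemma true as stated; you also make explicit the uniform bound $\| \psi \|_{\infty} \le \frac{1}{2\pi} + 2\pi M$ (mean value theorem for integrals plus the Lipschitz estimate), which the paper asserts without justification, and your closed-plus-totally-bounded formulation of compactness is equivalent to the paper's direct sequential argument. The one point worth flagging is that you have thereby proved compactness for the Lipschitz reinterpretation of $S$ rather than for the set verbatim as written; this is harmless downstream --- the outputs of $\mathcal{F}$ are smooth in $\theta$, so the Schauder argument only needs the compact convex superset --- but you should state the modified definition explicitly rather than leave it implicit in the proof.
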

\begin{proof}
The convexity of the set follows from the quick fact that $t \psi_1 + (1-t) \psi_2$ for any $\psi_1,\psi_2 \in S$ is also in $S$.
The uniform boundedness of $\psi$ and $\psi'$ indicates that any sequence of functions $\{ \psi_n \} \in S$ is bounded and equicontinuous.
It follows from the Arzela-Ascoli theorem that this sequence must have a subsequence $\{ \psi_{nk} \}$ uniformly converging to a function 
$\psi \in C[0,2\pi]$. To show that this function is indeed in $S$, we first assume that $\psi(\theta_0)<0$ at some $\theta_0 \in [0, 2 \pi]$, say
$\psi(\theta_0) =-\epsilon$ where $\epsilon >0$. Then the uniform convergence of the subsequence indicates that there is $K$ such that
$$ | \psi_{nk}(\theta) - \psi(\theta) | < \frac{\epsilon}{2}$$
for all $k>K$ and $\theta \in [0, 2 \pi]$. It follows that 
$$ \psi_{nk}(\theta) < \psi(\theta) + \frac{\epsilon}{2},$$
which violates the fact that $\psi_{nk}(\theta) \ge 0$ at 
$\theta=\theta_0$ considering the assumption that $\psi(\theta_0) =-\epsilon$. Thus $\psi(\theta)$ is bounded from below by $0$.
Following similar arguments one can show that $\psi(\theta)$ is periodic and has an integration of $1$ in $[0,2\pi]$.
Thus $\psi \in S$, and $S$ is compact.
\end{proof}

\begin{lemma} \label{lemma:F_continuous}
$\mathcal{F}(\psi)$ is a continuous map from $S$ to $S$ for $M>0$ sufficiently large.
\end{lemma}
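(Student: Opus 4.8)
The plan is to verify separately that $\mathcal{F}$ maps $S$ into $S$ and that it is continuous in the $\|\cdot\|_\infty$ norm, both resting on the observation that $\mathcal{F}(\psi)$ depends on $\psi$ only through the two scalars
$$G_i(\psi) = d_i - m\int_0^{2\pi}\alpha_i\psi\,d\theta, \qquad i = 1,2.$$
First I would record a priori bounds. Since each $\psi\in S$ is a probability density and $\alpha_1,\alpha_2$ are built from $\cos^2(\theta_m-\theta)$ and $\sin^2(\theta_m-\theta)$, they are smooth, $2\pi$-periodic, and bounded, together with their derivatives $\alpha_i'$, by constants involving $C_1^p,C_2^p$. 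Hence $|\int_0^{2\pi}\alpha_i\psi\,d\theta|\le\|\alpha_i\|_\infty$ and the scalars $G_i(\psi)$ are bounded uniformly over all of $S$; consequently the exponent $C_E\sum_i G_i(\psi)\alpha_i(\theta)$ lies in a fixed compact interval, so $\mathcal{E}(\psi)(\theta)=\exp(C_E\sum_i G_i(\psi)\alpha_i(\theta))$ is squeezed between two positive constants $0<E_{\min}\le\mathcal{E}(\psi)(\theta)\le E_{\max}$ independent of $\psi$ and $\theta$.

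For the self-map property I would check the defining conditions of $S$ in turn. Positivity is immediate from the exponential, periodicity follows from that of $\alpha_i$, and the integral-one condition holds by construction of the normalization $\mathcal{N}$, whose denominator $Z=\int_0^{2\pi}\mathcal{E}(\psi)\,d\theta\ge 2\pi E_{\min}>0$ is bounded away from zero. The only condition needing care is the derivative bound. Writing $\mathcal{F}(\psi)=\mathcal{E}(\psi)/Z$ and differentiating in $\theta$ (with $Z$ a constant),
$$\big(\mathcal{F}(\psi)\big)'(\theta)=\frac{\mathcal{E}(\psi)(\theta)}{Z}\,C_E\sum_{i=1}^2 G_i(\psi)\alpha_i'(\theta),$$
so $|(\mathcal{F}(\psi))'|\le (E_{\max}/(2\pi E_{\min}))\,C_E\sum_i|G_i(\psi)|\,\|\alpha_i'\|_\infty$, a bound that is uniform over $S$ and --- crucially --- independent of $M$, since it uses only the normalization $\int_0^{2\pi}\psi\,d\theta=1$ and never the constraint $|\psi'|\le M$. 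Taking $M$ at least this value closes the argument with no circularity.

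For continuity I would exploit that $\mathcal{F}$ factors through the finite-dimensional data $(G_1,G_2)$. If $\psi_n\to\psi$ uniformly then $\int_0^{2\pi}\alpha_i\psi_n\,d\theta\to\int_0^{2\pi}\alpha_i\psi\,d\theta$, so $G_i(\psi_n)\to G_i(\psi)$; since the exponent is a bounded linear combination of the $\alpha_i(\theta)$ and $\exp$ is Lipschitz on compacts, $\mathcal{E}(\psi_n)\to\mathcal{E}(\psi)$ in $\|\cdot\|_\infty$, and because the normalizers $Z_n\to Z$ stay bounded below by $2\pi E_{\min}$, the quotient $\mathcal{F}(\psi_n)=\mathcal{E}(\psi_n)/Z_n\to\mathcal{F}(\psi)$ uniformly. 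I expect the only genuinely delicate point to be the derivative estimate: one must confirm that its upper bound depends on $\psi$ solely through the probability normalization --- so that $M$ may be fixed in advance, independently of the map itself --- and that the normalizer $Z$ remains uniformly positive, legitimizing both the termwise differentiation and the quotient bound.
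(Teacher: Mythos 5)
Your proposal is correct, and its self-map half mirrors the paper's argument almost exactly: the paper likewise bounds the exponent uniformly over $S$ using only $\int_0^{2\pi}\psi\,d\theta=1$ (via the constant $\mathcal{B}=(|d_1|+|d_2|+2mC^p_{\max})C^p_{\max}$), sandwiches $\mathcal{E}(\psi)$ between $e^{-C_E\mathcal{B}}$ and $e^{C_E\mathcal{B}}$, bounds the normalizer below by $2\pi e^{-C_E\mathcal{B}}$, and defines $M$ from the same termwise differentiation $(\mathcal{F}(\psi))'=\mathcal{F}(\psi)\cdot C_E\sum_i G_i(\psi)\alpha_i'$, so that $M$ depends only on the normalization constraint and not on the derivative constraint --- the same non-circularity point you flag. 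Where you genuinely depart from the paper is the continuity half, and your version is the stronger one: the paper derives only the boundedness estimate $\|\mathcal{F}(\psi)\|_{\infty}\le e^{2C_E\mathcal{B}}\|\psi\|_{\infty}$ and asserts that this ``suggests'' continuity, which is not a valid inference for a nonlinear map (boundedness implies continuity only for linear operators). Your reduction of $\mathcal{F}$ through the finite-dimensional data $(G_1(\psi),G_2(\psi))$ --- uniform convergence $\psi_n\to\psi$ gives $G_i(\psi_n)\to G_i(\psi)$, local Lipschitzness of $\exp$ on the compact range of the exponent gives $\mathcal{E}(\psi_n)\to\mathcal{E}(\psi)$ uniformly, and the normalizers stay bounded below by $2\pi E_{\min}$ so the quotients converge --- supplies exactly the argument the paper's sketch is missing, at no extra cost in hypotheses. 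In short: same skeleton and same constants for mapping $S$ into $S$, but your continuity proof is complete where the paper's is only gestured at, and the factorization through $(G_1,G_2)$ is the observation that makes it essentially a two-variable calculus exercise.
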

\begin{proof}
Define 
\begin{eqnarray}
C^p_{\max} & = & \max \left( |C_1^p|, |C_2^p| \right), 
\end{eqnarray}
we first estimate 
\begin{eqnarray*}
\sum_{i=1}^2  \left ( d_i - m \int_{0}^{2 \pi} \alpha_i \psi d \theta \right) \alpha_i \le  (|d_1|+|d_2| + 2m C_{\max}^p ) C_{\max}^p := \mathcal{B} 
\end{eqnarray*}
considering $\int_{0}^{2 \pi} \psi d \theta = 1$, which also implies that 
$$ \| \psi \|_{\8} \ge  \frac{1}{2 \pi}$$ 
for $\psi \in S$. Consequently we will have
\begin{equation}
e^{-C_E \mathcal{B}} \le \| \mathcal{E}(\psi) \|_{\8} \le  e^{C_E \mathcal{B}},
\end{equation}
meaning $\mathcal{E}$ is bounded. To further estimate $\mathcal{F}(\psi)$ we notice that
\begin{eqnarray} 
\| \mathcal{F}(\psi) \|_{\8} & = & \| \mathcal{N}\circ \mathcal{E}(\psi) \|_{\8} \nonumber \\
 & \le & \frac{\| \mathcal{E}(\psi) \|_{\8}}{\int_{0}^{2 \pi} \mathcal{E}(\psi) d \theta} \nonumber \\
 & \le & \frac{e^{C_E \mathcal{B}}}{2 \pi e^{-C_E \mathcal{B}}} \nonumber \\
 & = & \frac{e^{2C_E \mathcal{B}}}{2 \pi}   \nonumber \\
 & \le & e^{2C_E \mathcal{B}}  \| \psi \|_{\8} \label{eqn:F_estimate}
\end{eqnarray}
suggesting that the composite map $\mathcal{F}$ is continuous on $C[0, 2\pi]$.

To find the upper bound $M$ of $|\psi'(\theta)|$ we calculate and estimate
\begin{eqnarray}
\left | \frac{d \mathcal{F}(\psi)}{d \theta} \right| & = & \frac{|d\mathcal{E}(\psi)/d \theta|}{\int_{0}^{2 \pi} \mathcal{E}(\psi) d \theta} 
\nonumber \\
 & = & \frac{m K}{k_BT}\sum_{i=1}^2  \left( d_i - m \int_{0}^{2 \pi} \alpha_i \psi d \theta \right)  \alpha'_i \cdot 
\frac{\mathcal{E}(\psi)}{\int_{0}^{2 \pi} \mathcal{E}(\psi) d \theta} \nonumber \\
& \le & \frac{\mathcal{B}}{2 \pi} e^{2\frac{m_K}{k_B T} \mathcal{B}} := M.
\end{eqnarray}

Furthermore, it follows from the definition that $\mathcal{F}$ is nonnegative, and periodic since $a_i$ is periodic. 
In addition $\dd{\int_0^{2 \pi} \mathcal{F}(\psi) d \theta = 1}$ because of the normalization by $\mathcal{N}$. 
Hence $\mathcal{F}$ is indeed a continuous map from $S$ to $S$ with the constant $M$ defined above.
\end{proof}

We are now at the position to prove the existence of the stationary solution. 
\begin{theorem}
Equation (\ref{eqn:smol_rotdiff_steady}) has a solution $\psi \in S$.
\end{theorem}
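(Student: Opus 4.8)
The plan is to recognize that this existence statement is exactly a Schauder fixed point argument, for which all the real work has already been assembled in Lemmas \ref{lemma:S_compact} and \ref{lemma:F_continuous}. First I would recall the equivalence established in the derivation leading to Eq. (\ref{eqn:map_F}): via the Slotboom transformation, the reduction of (\ref{eqn:smol_rotdiff_steady_trans}) to a constant solution $\tilde{\psi} = N$ (forced to be constant by periodicity), and the normalization enforcing $\int_0^{2\pi}\psi\,d\theta = 1$, a function $\psi \in S$ solves the stationary equation (\ref{eqn:smol_rotdiff_steady}) if and only if it is a fixed point of the composite map $\mathcal{F} = \mathcal{N}\circ\mathcal{E}$. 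Hence it suffices to exhibit a fixed point of $\mathcal{F}$ in $S$.

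Next I would check the hypotheses of the Schauder fixed point theorem, which asserts that a continuous self-map of a nonempty convex compact subset of a Banach space possesses a fixed point. By Lemma \ref{lemma:S_compact}, the set $S$ is a convex and compact subset of the Banach space $C[0,2\pi]$ under the norm $\|\cdot\|_{\8}$, and it is nonempty because the uniform distribution $\psi \equiv 1/(2\pi)$ satisfies all four defining conditions of $S$ (nonnegativity, periodicity, unit integral, and a vanishing derivative bounded by $M$). By Lemma \ref{lemma:F_continuous}, with $M$ taken as the constant produced there, the map $\mathcal{F}$ is continuous and carries $S$ into $S$. Applying the Schauder fixed point theorem then yields a $\psi \in S$ with $\mathcal{F}(\psi) = \psi$, which by the equivalence above is a solution of (\ref{eqn:smol_rotdiff_steady}) in the admissible set.

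The proof therefore requires essentially no new computation, and the only point demanding genuine care is not circularity in the choice of $M$: the uniform derivative bound $M$ appearing in the definition (\ref{eqn:S_def}) of $S$ must coincide with the constant derived in Lemma \ref{lemma:F_continuous}. This is what makes the self-map property $\mathcal{F}(S)\subseteq S$, as opposed to mere continuity on $C[0,2\pi]$, the load-bearing step. The consistency holds because that bound is uniform over all $\psi \in S$, depending only on the fixed data $d_1, d_2, m, K, k_BT$ and $C^p_{\max}$ through the quantity $\mathcal{B}$, and not on the particular $\psi$ at which $\mathcal{F}$ is evaluated. Once this uniformity is noted, the theorem follows at once from Schauder.
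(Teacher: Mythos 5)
Your proposal is correct and follows essentially the same route as the paper: the equivalence between stationary solutions and fixed points of $\mathcal{F}=\mathcal{N}\circ\mathcal{E}$, followed by the Schauder fixed point theorem on the compact convex set $S$, invoking Lemmas \ref{lemma:S_compact} and \ref{lemma:F_continuous} exactly as the paper does. Your added remarks on the nonemptiness of $S$ (witnessed by $\psi \equiv 1/(2\pi)$) and on the non-circularity of the constant $M$, which is uniform over $S$ because it depends only on the fixed data through $\mathcal{B}$, are points the paper leaves implicit but do not change the argument.
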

\begin{proof}
A solution of Equation (\ref{eqn:smol_rotdiff_steady}) is equivalent to a fixed point of the map $\mathcal{F}$, which 
exists following the Schauder fixed point theorem applied to $\mathcal{F}$ on $S$, along with the fact that $S$ 
is a compact subset of $C[0,2 \pi]$ and $\mathcal{F}$ is a continuous map from $S$ to $S$, as justified in Lemma \ref{lemma:S_compact} 
and \ref{lemma:F_continuous}.
\end{proof}

\subsection{Uniqueness of the Stationary Solution}
We are more interested in a constructive fixed point mapping towards the unique solution of 
equation (\ref{eqn:smol_rotdiff_steady}), if exists, for the biologically relevant range of parameters.
To this end we shall explore the contractivity of the map $\mathcal{F}$ in appropriate subset of $C[0,2\pi]$.

We compute the Fr\'echet derivative $\frac{DF}{D\psi}$ of $\mathcal{F}$ as
\begin{eqnarray}
\frac{DF}{D\psi}(w) & = & 
\frac{\mathcal{E}(\psi) \cdot C_E \dd{\sum_{i=1}^2 \left(-m a_i \int_{0}^{2 \pi} a_i w d \theta \right)}}
{\dd{\int_{0}^{2 \pi}\mathcal{E}(\psi) d \theta}} - \nonumber \\
& & 
\frac{\mathcal{E}(\psi) \dd{ \int_{0}^{2 \pi} \mathcal{E}(\psi) \cdot
\left [ C_E \sum_{i=1}^2 (-m a_i) \int_{0}^{2 \pi} a_i w d \theta \right] d \theta}}
{\left( \dd{\int_{0}^{2 \pi}\mathcal{E}(\psi) d \theta} \right )^2}  \nonumber \\
 & = & \mathcal{F} (\psi)  \cdot C_E \sum_{i=1}^2 \left(-m a_i \int_{0}^{2 \pi} a_i w d \theta \right) - \nonumber \\
 &  & \mathcal{F} (\psi) \cdot 
C_E \sum_{i=1}^2 \left[-m \cdot
\frac{\dd{ \int_{0}^{2 \pi} \mathcal{E}(\psi) a_i d \theta}}{\dd{\int_{0}^{2 \pi}\mathcal{E}(\psi) d \theta}} \int_{0}^{2 \pi} a_i w d\theta \right] \nonumber \\
 & = & \mathcal{F} (\psi)  \cdot C_E \sum_{i=1}^2 \left(-m a_i \int_{0}^{2 \pi} a_i w d \theta \right) - \nonumber \\
 &  & \mathcal{F} (\psi) \cdot 
C_E \sum_{i=1}^2 \left( -m \bar{a}_i \int_{0}^{2 \pi} a_i w \theta \right) \nonumber \\
 & = & \mathcal{F} (\psi)  \cdot C_E \sum_{i=1}^2 \left [-m ( a_i - \bar{a}_i ) \int_{0}^{2 \pi} a_i w d \theta  \right]
\end{eqnarray}
for $w \in C[0, 2 \pi]$, where $\bar{a}_i$ arises in the mean value approximation of the integral $\int_{0}^{2 \pi} \mathcal{E}(\psi) a_i d \theta$. 
One can then estimate
\begin{eqnarray}
\left \| \frac{D \mathcal{F}}{D \psi} (w) \right \|_{\infty} & \le & 4 \pi m C_E \cdot \Delta C^p \cdot C^p_{\max} \cdot 
\| \mathcal{F}(\psi) \|_{\8} \| w \|_{\8}, 
\end{eqnarray}
where 
\begin{eqnarray}
\Delta C^p & = & \max \left( C_1^p, C_2^p \right) - \min \left( C_1^p, C_2^p \right).
\end{eqnarray}
It follows that
\begin{eqnarray}
\left \| \frac{D \mathcal{F}}{D \psi} \right \| & \le & 
4 \pi m C_E \cdot \Delta C^p \cdot C^p_{\max} \cdot \| \mathcal{F}(\psi) \|_{\8}  \nonumber \\
& \le & 4 \pi m C_E \cdot \Delta C^p \cdot C^p_{\max} \cdot \frac{1}{2 \pi} e^{2 C_E \mathcal{B}} \nonumber \\
&  = & 2m C_E \cdot \Delta C^p \cdot C^p_{\max} \cdot e^{2 C_E \mathcal{B}}, \label{eqn:DFDphi}
\end{eqnarray}
where we applied the absolute estimate of $\mathcal{F}(\psi)$ obtained right before the inequality (\ref{eqn:F_estimate}).

For the map $\mathcal{F}$ to be contractive we need to have $\| D\mathcal{F}/D \psi \| < 1$, it is sufficient to have
\begin{equation} \label{eqn:contractive_cond_0}
 e^{2 C_E \mathcal{B}} < \frac{1}{2m C_E \cdot \Delta C^p \cdot C^p_{\max}}. 
\end{equation}
in regard to the estimate (\ref{eqn:DFDphi}). Condition (\ref{eqn:contractive_cond_0}) is equivalent to 
\begin{equation} \label{eqn:contractive_cond_1}
 e^{2 \frac{mK}{k_BT} (|d_1| + |d_2| + 2m C^p_{\max}) C^p_{\max} } < \frac{k_BT}{2m^2 K \cdot \Delta C^p \cdot C^p_{\max}}. 
\end{equation}
It is apparent that equation (\ref{eqn:contractive_cond_1}) holds true for any $|d_1|,|d_2|,\Delta C^p$ and $C^p_{\max}$ 
when $m$ is sufficiently small, and thus we shall conclude 
\begin{theorem}
There exists a unique stationary solution of (\ref{eqn:smol_rotdiff_steady}) for any membrane protein when its surface number
density is sufficiently small.
\end{theorem}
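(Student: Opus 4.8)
\emph{Proposal.} The plan is to extract both assertions from the contraction mapping principle applied to $\mathcal{F}$ on $S$ in the small-density regime; existence is in any case already supplied by the Schauder argument of the preceding theorem, so the genuine content is the contractivity that yields uniqueness. Since $S$ is a closed and convex subset of the Banach space $C[0,2\pi]$---indeed compact by Lemma \ref{lemma:S_compact}, hence complete as a metric subspace---and since the stationary solutions of (\ref{eqn:smol_rotdiff_steady}) lying in $S$ are precisely the fixed points of the map $\mathcal{F}$ of (\ref{eqn:map_F}), a single contraction estimate on $S$ produces a unique such fixed point and therefore a unique stationary solution.

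First I would upgrade the pointwise Fr\'echet estimate into a genuine Lipschitz bound. The inequality (\ref{eqn:DFDphi}) is uniform over $\psi \in S$, because it was obtained from the bound $\| \mathcal{F}(\psi) \|_{\infty} \le \tfrac{1}{2\pi} e^{2 C_E \mathcal{B}}$ recorded just before (\ref{eqn:F_estimate}), which holds for every $\psi \in S$. Convexity of $S$ then lets me integrate $D\mathcal{F}/D\psi$ along the segment $t \psi_1 + (1-t)\psi_2$ and invoke the mean value inequality to get
$$ \| \mathcal{F}(\psi_1) - \mathcal{F}(\psi_2) \|_{\infty} \le L \, \| \psi_1 - \psi_2 \|_{\infty}, \qquad L := 2 m C_E \cdot \Delta C^p \cdot C^p_{\max} \cdot e^{2 C_E \mathcal{B}}, $$
for all $\psi_1,\psi_2 \in S$. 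Thus $\mathcal{F}$ is a contraction exactly when $L<1$, which is precisely condition (\ref{eqn:contractive_cond_1}), and the Banach fixed point theorem then forces the fixed point, and hence the stationary solution, to be unique in $S$.

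The remaining, genuinely load-bearing step is to verify that (\ref{eqn:contractive_cond_1}) holds once $m = a_p^2 \rho_p$ is small enough, and here I would track carefully how each quantity depends on $m$. Using the full-coverage identity (\ref{eqn:I}) we have $a_l^2 \rho_l = 1 - m$, so that $d_1 = C_1 - C_1^l(1-m)$ and $d_2 = C_2 - C_2^l(1-m)$ remain bounded as $m \to 0^+$, whence $\mathcal{B} = (|d_1| + |d_2| + 2 m C^p_{\max}) C^p_{\max}$ stays bounded. Since $C_E = mK/(k_BT) = O(m)$, the exponent $2 C_E \mathcal{B} \to 0$, so the left-hand side of (\ref{eqn:contractive_cond_1}) tends to $1$, while the right-hand side $k_BT / (2 m^2 K \cdot \Delta C^p \cdot C^p_{\max})$ diverges like $m^{-2}$. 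Hence there is a threshold $m_0 > 0$ such that the strict inequality (\ref{eqn:contractive_cond_1}) holds for every $0 < m < m_0$, independently of the fixed values $|d_1|, |d_2|, \Delta C^p, C^p_{\max}$.

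I expect the main obstacle to be bookkeeping rather than anything deep: one must confirm that the Lipschitz constant is truly uniform over all of $S$ (so the mean value inequality applies across the convex set) and that no quantity entering $\mathcal{B}$ covertly diverges as $m \to 0$. The only place where the density enters the exponent nontrivially is through $d_i = C_i - C_i^l(1-m)$ via (\ref{eqn:I}), and since this dependence is benign in the limit, the comparison of a quantity approaching $1$ against one growing like $m^{-2}$ closes the argument. Here I would also record the standing hypothesis $C_1^p \ne C_2^p$, which gives $\Delta C^p > 0$ and keeps the right-hand side finite, so that the threshold $m_0$ is well defined.
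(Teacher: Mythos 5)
Your proposal is correct and takes essentially the same route as the paper: uniqueness via contractivity of $\mathcal{F}$ on $S$, derived from the uniform Fr\'echet-derivative bound (\ref{eqn:DFDphi}) and the sufficient condition (\ref{eqn:contractive_cond_1}), which is then verified in the small-$m$ limit. Your extra steps---invoking convexity of $S$ and the mean value inequality to upgrade the derivative bound to a genuine Lipschitz constant, noting completeness of the compact set $S$ for Banach's theorem, and tracking that $d_i = C_i - C_i^l(1-m)$ stays bounded via (\ref{eqn:I}) while the left side of (\ref{eqn:contractive_cond_1}) tends to $1$ and the right side grows like $m^{-2}$---simply make rigorous what the paper dismisses as ``apparent,'' so there is no substantive divergence.
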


\begin{figure}[!ht]
\begin{center}
\includegraphics[width=5cm]{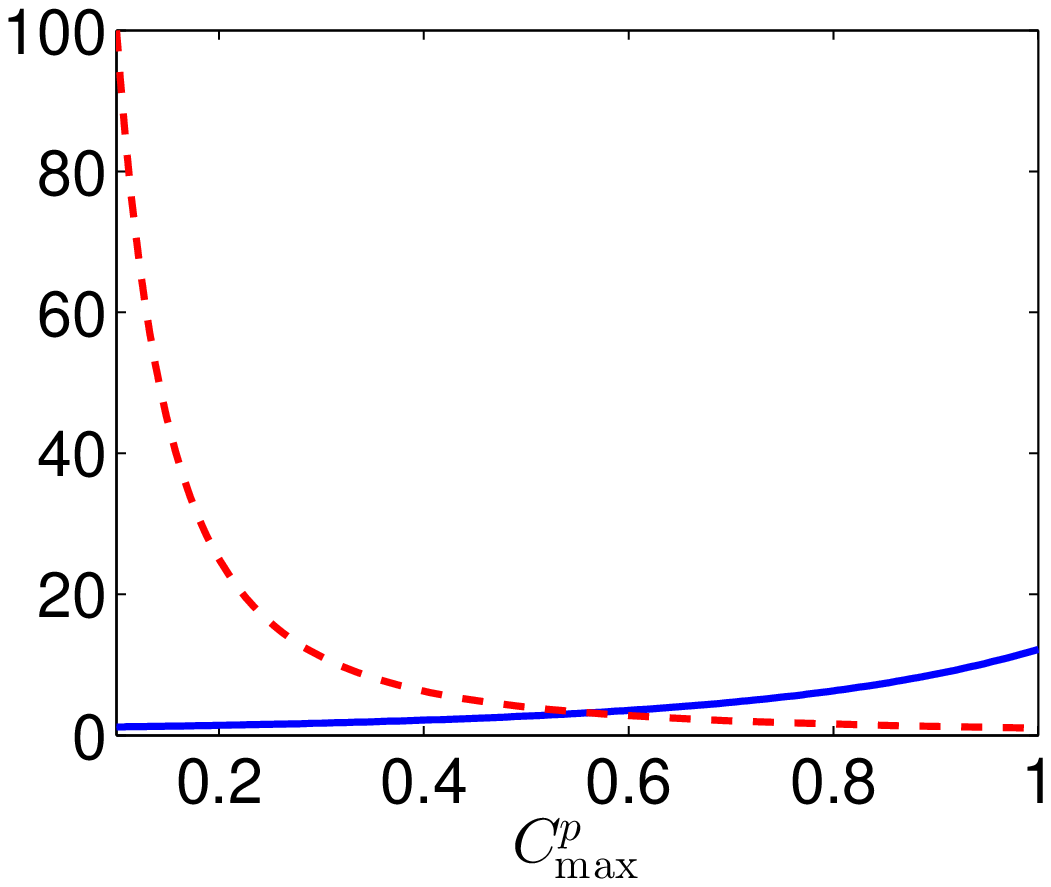} \hspace{1cm}
\includegraphics[width=5cm]{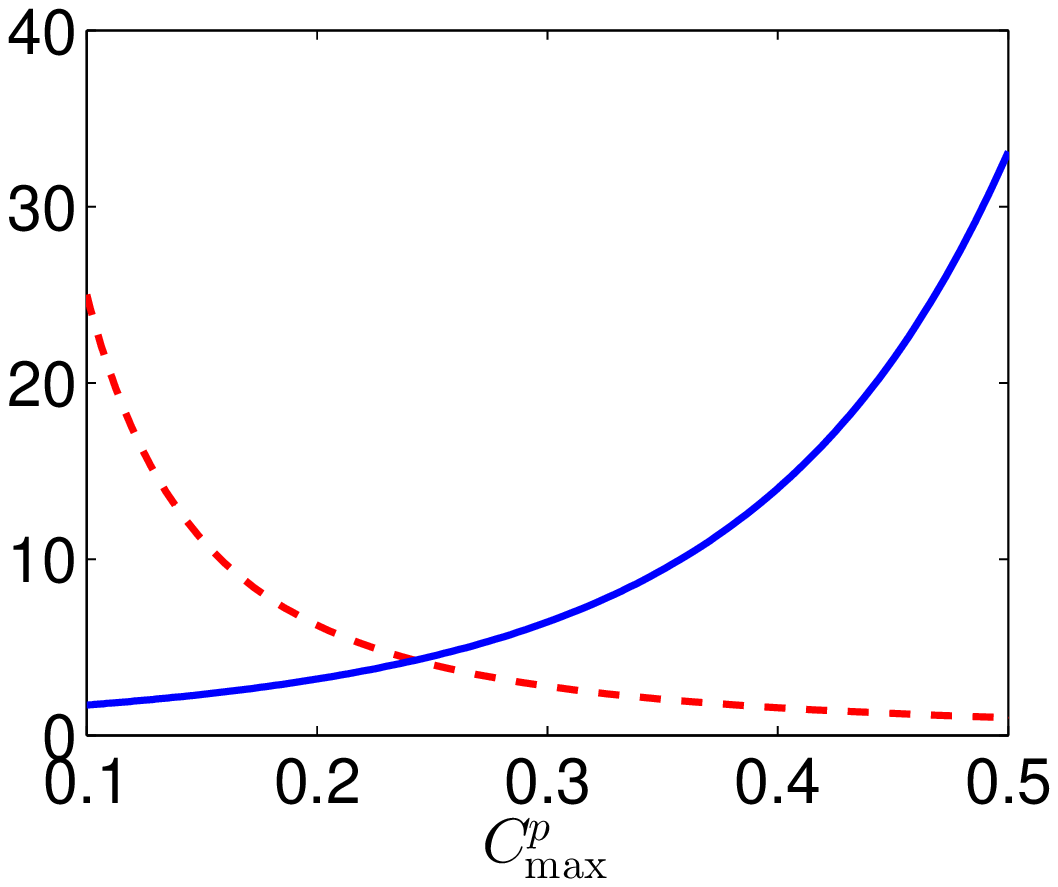}
\caption{Functions on the two sides of inequality (\ref{eqn:contractive_cond_1}) with $K/k_BT=1, |d_1|=|d_2|=1$ and
two values of $m=0.5$ (left chart) and $m=1$ (right chart). The exponential growth function (blue, solid) always intersects with 
the reciprocal of the power function of degree $2$ (red, dash). The value of $C^p_{\max}$ at the intersection
defines the intrinsic curvature of the membrane proteins below which the continuous map $\mathcal{F}(\psi)$ is 
guaranteed to be contractive. In the worst case scenario for which $m=1$, the two function intersect at $C^p_{\max} \approx 0.24$,
which is larger than the known intrinsic curvatures of membrane proteins.}
\label{fig:curv_m}
\end{center}
\end{figure}
The constant $0\le m \le 1$ may be not necessarily small for interested ranges of curvatures and number densities of proteins
on membrane surfaces. Fig (\ref{fig:curv_m}) plots the functions on two sides of inequality (\ref{eqn:contractive_cond_1})
varying with $C^p_{\max}$ for a set of given parameters. Tab (\ref{tab:Cmax_protein}) lists the values 
or ranges of values of these parameters. 
\begin{table}[!ht]
\begin{center}
\begin{tabular}{ll} \hline
Parameter & Value or range of value \\ \hline
$k_B$   & $1.38 \times 10^{-23}$ J$\cdot$K$^{-1}$ \\
$T$     & $200 \sim 400$K \\
$K$     &  $1.5 \sim 5.0 \times 10^{-21}$ J$\cdot$nm$^{2}$ \\
$|d_1|$ & $0 \sim 1$ nm$^{-1}$ \\
$|d_2|$ & $0 \sim 1$ nm$^{-1}$ \\ 
$C^p_{\max}$ & $0 \sim 0.1$ nm$^{-1}$ \\ \hline
\end{tabular}
\caption{Biologically relevant values or ranges of values of the controlling parameters in surface rotational diffusion model
of membrane proteins. Ranges of $|d_1|,|d_2|$ are deduced over all major lipid species including phosphatidylcholines(PC),
phosphatidylethanolamines(PE), and Cholesterols \cite{GrunnerS1985a,KamalM2009a,McmahonT2015a}. Values of $K$ are estimated
based on a membrane compression modulus $K_a = 100 \sim 300 mN/m$, an avergage bilayer thickness $t=4nm$, and the relation
$K = c_m K_a t^2$ where $c_m = $1 nm$^2$. Range of $C^p_{\max}$ is
summarized over five classes of most well understood curvature generation or curvature sensing proteins: BIN/Amphiphysin/Rvs (BAR) domain superfamily 
proteins \cite{QualmannB2011a,MimC2012b,ZimmerbergJ2004a,HabermannB2004a,SimunovicM2015a,PrevostC2015a}; 
The endosomal sorting complex required for transport (ESCRT) \cite{WolletT2009a,LeeI2015a,AdellM2016a,HenneW2013a}; 
group specific antigens (Gag) \cite{WeissE2011a,LeeI2015a}; Matrix-2 proteins (M2) \cite{KovacsF1997,RobertK2013a,RossmanJ2010a}; 
and reticulons (RTN): \cite{HuJ2008a,SanoF2012a,JarschI2016a}.}
\label{tab:Cmax_protein}
\end{center}
\end{table}
For these biologically relevant situations it is sufficient to show that 
\begin{equation} \label{eqn:contractive_cond_2}
 e^{2 \frac{K}{k_BT} (|d_1| + |d_2| + 2C^p_{\max}) C^p_{\max} } < \frac{k_BT}{4K \cdot (C^p_{\max})^2}
\end{equation}
noticing that $\Delta C^p \le 2 C^p_{\max}$. For given membrane curvature, lipid species and 
surface number density of membrane proteins, the only variable in functions on two sides of 
inequality (\ref{eqn:contractive_cond_2}) is $C^p_{\max}$. Solving this inequality for saturated surface membrane coverage with 
$m=1$ gives that 
\begin{equation}
 C^p_{\max} < 0.24.
\end{equation}
This allows us to justify the uniqueness of the fixed point for the map $\mathcal{F}(\psi)$, as stated in the following theorem:
\begin{theorem} \label{thm:unique}
Under biologically relevant conditions there exists a unique stationary solution 
of (\ref{eqn:smol_rotdiff_steady}) for any membrane protein number densities provided that the intrinsic principal curvatures 
of the protein are less than $0.24nm^{-1}$.
\end{theorem}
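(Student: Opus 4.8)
The plan is to combine the existence already secured by the Schauder argument with a Banach contraction argument, and then to reduce the abstract contractivity requirement to the scalar inequality (\ref{eqn:contractive_cond_2}), whose validity over the biological parameter window I verify by elementary monotonicity. Existence is in hand: the existence result established above, via Schauder's theorem on the compact convex set $S$ (Lemma \ref{lemma:S_compact}) together with the continuity of $\mathcal{F}:S\to S$ (Lemma \ref{lemma:F_continuous}), already furnishes at least one stationary solution $\psi \in S$. So the entire burden here is \emph{uniqueness}, and the idea is to upgrade the Schauder fixed point to a Banach fixed point by showing that, under the stated restrictions, $\mathcal{F}$ is a contraction on $S$. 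Since $S$ is a closed (indeed compact) subset of the complete space $(C[0,2\pi],\|\cdot\|_\infty)$, it is itself a complete metric space, so a contraction on $S$ has exactly one fixed point, which must then coincide with the solution produced by Schauder.

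To establish contractivity I would invoke the Fr\'echet-derivative bound (\ref{eqn:DFDphi}) already computed,
\[
\left\| \frac{D\mathcal{F}}{D\psi} \right\| \le 2 m C_E \cdot \Delta C^p \cdot C^p_{\max} \cdot e^{2 C_E \mathcal{B}},
\]
which holds uniformly for $\psi \in S$ because its derivation used only $\int_0^{2\pi}\psi\,d\theta = 1$ and $\|\psi\|_\infty \ge 1/(2\pi)$. Convexity of $S$ (Lemma \ref{lemma:S_compact}) lets me join any two $\psi_1,\psi_2 \in S$ by a segment lying in $S$, so the Banach-space mean value inequality yields $\|\mathcal{F}(\psi_1) - \mathcal{F}(\psi_2)\|_\infty \le \left(\sup_{\psi\in S}\|D\mathcal{F}/D\psi\|\right)\|\psi_1-\psi_2\|_\infty$. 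Hence it suffices to force the displayed bound below $1$, which is precisely condition (\ref{eqn:contractive_cond_0}), equivalently the explicit inequality (\ref{eqn:contractive_cond_1}).

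The remaining work is to check (\ref{eqn:contractive_cond_1}) over the biological window of Table \ref{tab:Cmax_protein}, and the first step is to isolate the worst case. The left-hand side of (\ref{eqn:contractive_cond_1}) is increasing in $m$, in $|d_1|+|d_2|$, in $K/k_BT$, and in $C^p_{\max}$, while its right-hand side $k_BT/(2m^2 K\,\Delta C^p\,C^p_{\max})$ decreases in $m$; combining these monotonicities with $\Delta C^p \le 2 C^p_{\max}$ lets me replace (\ref{eqn:contractive_cond_1}) by the more stringent, $m$-free inequality (\ref{eqn:contractive_cond_2}) at the saturated coverage $m=1$ and the extreme lipid mismatch $|d_1|=|d_2|=1$. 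Read as a scalar inequality in the single variable $C^p_{\max}$, the left-hand side $\exp\left(2\frac{K}{k_BT}(|d_1|+|d_2|+2C^p_{\max})C^p_{\max}\right)$ is continuous and strictly increasing from the value $1$ at $C^p_{\max}=0$, whereas the right-hand side $k_BT/(4K(C^p_{\max})^2)$ is strictly decreasing from $+\infty$ at $0^+$ to $0$; the two curves therefore cross exactly once, at a threshold $C^\ast$, and (\ref{eqn:contractive_cond_2}) holds for every $C^p_{\max} < C^\ast$. A numerical evaluation at the worst-case parameters locates $C^\ast \approx 0.24\,\mathrm{nm}^{-1}$ (Fig. \ref{fig:curv_m}), which sits well above the biologically observed range $C^p_{\max} \le 0.1\,\mathrm{nm}^{-1}$, so the contraction condition holds throughout and the fixed point is unique.

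I expect the main obstacle to be the \emph{worst-case reduction} rather than any single estimate: one must argue carefully that pushing every parameter to its least favourable admissible value, and in particular bounding the unknown $\Delta C^p$ by $2C^p_{\max}$, still produces a genuinely \emph{sufficient} condition, so that verifying the single scalar threshold $C^\ast \approx 0.24$ certifies contractivity for the entire parameter box at once. The underlying monotonicity claims are elementary but must be checked in the correct direction for each parameter, since the same quantity (notably $m$) enters both the exponent and the prefactor with opposite effects; and the threshold $C^\ast$ is available only numerically, so the final conclusion rests on comparing the measured curvature range against the computed crossing point in Fig. \ref{fig:curv_m}.
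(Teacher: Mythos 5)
Your proposal is correct and follows essentially the same route as the paper: bound the Fr\'echet derivative as in (\ref{eqn:DFDphi}), reduce contractivity to (\ref{eqn:contractive_cond_0})--(\ref{eqn:contractive_cond_1}), then pass to the worst case $m=1$, $\Delta C^p \le 2C^p_{\max}$ to obtain (\ref{eqn:contractive_cond_2}) and the numerical threshold $C^p_{\max} < 0.24\,\mathrm{nm}^{-1}$. Your additions --- explicitly invoking completeness of $S$ and the Banach fixed point theorem, using convexity of $S$ with the mean value inequality to convert the derivative bound into a Lipschitz estimate, and the monotonicity argument showing the two curves in (\ref{eqn:contractive_cond_2}) cross exactly once --- merely make rigorous steps the paper leaves implicit, rather than constituting a different argument.
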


A trivial stationary solution can be obtained when the protein has no orientational preference, i.e., $C_1^p = C_2^p$.
While in this case these does not exist an associated orthorgonal principal directions, we can choose two arbitrary 
orthorgonal unit vectors as $v_1^p,v_2^p$ to represent $v_1$. It follows that $\alpha_{1,2} = C_1^p = C_2^p$ are 
both constants, and thus the nonlinear map (\ref{eqn:map_F}) will produce 
$$ \mathcal{F}(\psi) = \frac{1}{2 \pi}$$
for any input $\psi$. Therefore the $1/2 \pi$ is the only fixed point of the map, hence the unique stationary solution. This
constant $\psi$ is consistent with the physical nature of the problem for that the protein shall be uniformly orientated if
its has no orientational preference. For solutions of $\psi$ with $C_1^p \ne C_2^p$ in general we will resort to numerical 
methods to be developed below.

\section{Numerical Approximation of the Stationary Solution} \label{sect:numerical}
The nonlinearity of the map (\ref{eqn:map_F}) makes it practically impossible to find its unique fixed point in a close form. For that reason we 
proceed in this section to numerically approximate of the stationary solution $\psi(\theta)$. The periodicity of the fixed point $\psi$ allows us
to do this in the Fourier space, i.e., we will find the numerical solution to Equation (\ref{eqn:smol_rotdiff_steady}) of the form
\begin{equation} \label{eqn:numsol_psi}
\psi_N(\theta) = \sum_{k=0}^{N_{\psi}} C_k \cos(k \theta) + S_k \sin(k \theta)
\end{equation}
for some finite integer $N_{\psi}>0$ and coefficients $\{C_k \}, \{S_k\}$. Continuity of $\psi(\theta)$ indicates that the 
convergence of $\psi_N$ to $\psi = \psi_{\infty}$ is uniform.

We first make an assertion on the reduced form of the solution to Equation (\ref{eqn:smol_rotdiff_steady}):
\begin{proposition}
The solution to Equation (\ref{eqn:smol_rotdiff_steady}) can be given by
\begin{equation} \label{eqn:numsol_psi_cosine}
\psi(\theta) = \sum_{k=0}^{\8} T_k \cos(2k (\theta_m - \theta)). 
\end{equation}
for some coefficients $\{ T_k \}$.
\end{proposition}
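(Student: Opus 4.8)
The plan is to show that any fixed point $\psi$ of the map $\mathcal{F}$ is forced, by the structure of $\alpha_1,\alpha_2$, to take the form $P\,e^{Q\cos(2(\theta_m-\theta))}$ for real constants $P,Q$, and then to expand this single exponential via the generating function for the modified Bessel functions of the first kind. The even-harmonic structure claimed in (\ref{eqn:numsol_psi_cosine}) will emerge automatically.

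First I would rewrite $\alpha_1,\alpha_2$ in terms of the auxiliary angle $\phi := \theta_m-\theta$. Since $a=\cos\phi$ and $b=\sin\phi$, the double-angle identities $\cos^2\phi=\tfrac{1+\cos2\phi}{2}$ and $\sin^2\phi=\tfrac{1-\cos2\phi}{2}$ give
\begin{align*}
\alpha_1 &= \frac{C_1^p+C_2^p}{2} + \frac{C_1^p-C_2^p}{2}\cos(2\phi), \\
\alpha_2 &= \frac{C_1^p+C_2^p}{2} - \frac{C_1^p-C_2^p}{2}\cos(2\phi).
\end{align*}
The crucial observation is that both $\alpha_i$ depend on $\theta$ only through $\cos(2(\theta_m-\theta))$, so every harmonic that can survive in the exponent is a multiple of this single one.

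Next I would note that, for any fixed $\psi\in S$, each integral $\int_0^{2\pi}\alpha_i\psi\,d\theta$ is merely a real number, so the coefficient $\beta_i := d_i - m\int_0^{2\pi}\alpha_i\psi\,d\theta$ is a constant independent of $\theta$. Substituting the expressions above into the exponent $C_E(\beta_1\alpha_1+\beta_2\alpha_2)$ of the fixed-point identity (\ref{eqn:map_initial}) and collecting terms, the exponent collapses to $A+B\cos(2(\theta_m-\theta))$ with
$$A=C_E(\beta_1+\beta_2)\frac{C_1^p+C_2^p}{2},\qquad B=C_E(\beta_1-\beta_2)\frac{C_1^p-C_2^p}{2}.$$
Hence $\psi=N e^{A}\,e^{B\cos(2(\theta_m-\theta))}=:P\,e^{Q\cos(2(\theta_m-\theta))}$. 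The final step is to invoke the Jacobi--Anger identity $e^{z\cos x}=I_0(z)+2\sum_{k\ge 1}I_k(z)\cos(kx)$, the generating relation for the $I_k$, applied with $x=2(\theta_m-\theta)$ and $z=Q$. This yields
$$\psi(\theta)=P\,I_0(Q)+2P\sum_{k=1}^{\infty}I_k(Q)\cos\bigl(2k(\theta_m-\theta)\bigr),$$
which is exactly (\ref{eqn:numsol_psi_cosine}) with $T_0=P\,I_0(Q)$ and $T_k=2P\,I_k(Q)$ for $k\ge1$; uniform convergence is automatic because $e^z$ is entire, so $I_k(Q)$ decays super-geometrically in $k$.

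The main obstacle is conceptual rather than computational: one must recognize that the nonlocal, self-referential integrals $\int\alpha_i\psi\,d\theta$ do not spoil the argument. Although $P$ and $Q$ are determined only implicitly through $\psi$ itself, they are nonetheless fixed real constants for the solution in hand, so the exponent is genuinely of the one-harmonic form $A+B\cos(2(\theta_m-\theta))$ and the Bessel expansion applies verbatim. The only remaining care is to confirm that the resulting series converges uniformly and may be read off as the claimed even-order cosine expansion, which follows at once from the analyticity of the exponential.
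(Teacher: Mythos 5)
Your proof is correct, and it takes a route that is related to but genuinely cleaner than the paper's own argument. The paper keeps the exponent in the form $p_1 + p_2\cos^2(\theta_m-\theta)$ (Equations (\ref{eqn:exp_1})--(\ref{eqn:p_2})), Taylor-expands $e^{p_2\cos^2(\theta_m-\theta)}$, and then invokes the power-reduction identities expressing each $\cos^{2n}(\theta_m-\theta)$ as a finite combination of $\cos(2k(\theta_m-\theta))$, concluding somewhat informally that only even cosine modes survive; you instead reduce the exponent one step further to the single harmonic $A+B\cos(2(\theta_m-\theta))$ via the double-angle identities and apply the Jacobi--Anger generating identity $e^{z\cos x}=I_0(z)+2\sum_{k\ge 1}I_k(z)\cos(kx)$ directly. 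The shared crux--which you identify explicitly and the paper leaves implicit--is that the nonlocal integrals $\int_0^{2\pi}\alpha_i\psi\,d\theta$ are fixed real numbers for the solution in hand, so the exponent is a genuine one-harmonic function of $\theta$. Your route buys three things the paper's proof does not: explicit coefficients $T_0=P I_0(Q)$, $T_k=2P I_k(Q)$ rather than a bare existence claim for $\{T_k\}$; immediate absolute and uniform convergence from the super-geometric decay $I_k(Q)\sim (Q/2)^k/k!$; and a seamless link to the subsequent numerical section, where the paper itself re-derives exactly your expansion in Equation (\ref{eqn:exp_4}) and even notes in a footnote that it ``constitutes an alternative justification'' of the cosine expansion (\ref{eqn:exp_2})--so you have in effect supplied that alternative proof. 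One small point in your favor: your expansion carries the correct factor $2$ on the $k\ge 1$ Bessel terms, which the paper's Equation (\ref{eqn:exp_4}) omits relative to the standard identity; this does not affect the proposition (which asserts only the existence of some $\{T_k\}$), but it matters for the quantitative coefficient relations derived afterwards. The paper's Taylor-expansion route is marginally more elementary in that it avoids citing the Bessel generating function, at the cost of a vaguer mode-matching step.
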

\begin{proof}
We examine the solution by checking the fixed point $\psi = \mathcal{F}(\psi)$, which is equivalent to the map defined in (\ref{eqn:map_initial}).
Notice there $N$ is a positive scalar, and thus $\psi(\theta)$ on the left-hand side of the equation must have the same type and number of modes as 
the right-hand side, which can be written as
\begin{eqnarray}
e^{C_E \sum_{i=1}^2 \left( d_i - m \int_{0}^{2 \pi} \alpha_i \psi d \theta \right)  \alpha_i } & = & 
e^{C_E \left( d_1 - m \int_{0}^{2 \pi} \alpha_1 \psi d \theta \right)  (\cos ^2 (\theta_m - \theta) C_1^p + \sin ^2 (\theta_m - \theta) C_2^p)  }  ~ \cdot \nonumber \\
 & & e^{C_E \left( d_2 - m \int_{0}^{2 \pi} \alpha_2 \psi d \theta \right)  (\sin ^2 (\theta_m - \theta) C_1^p + \cos ^2 (\theta_m - \theta) C_2^p)  } \nonumber \\
& = & 
e^{C_E \left( d_1 - m \int_{0}^{2 \pi} \alpha_1 \psi d \theta \right)  (C_2^p + (C_2^p - C_1^p)\cos ^2 (\theta_m - \theta)) } ~ \cdot \nonumber \\
 & & e^{C_E \left( d_2 - m \int_{0}^{2 \pi} \alpha_2 \psi d \theta \right)  (C_1^p + (C_1^p - C_2^p)\cos ^2 (\theta_m - \theta))  } \nonumber \\
& = & e^{p_1} \cdot e^{p_2 \cos^2 (\theta_m - \theta) }, \label{eqn:exp_1}
\end{eqnarray}
where 
\begin{eqnarray}
p_1 & = & C_E \left( d_1 - m \int_0^{2 \pi} \alpha_1 \psi d \theta \right) C_2^p + C_E \left( d_2 - m \int_0^{2 \pi} \alpha_2 \psi d \theta \right) 
C_1^p, \label{eqn:p_1} \\
p_2 & = & C_E \left( d_1 - d_2 - m \int_0^{2 \pi} (\alpha_1 - \alpha_2) \psi d \theta \right) (C_2^p - C_1^p) \label{eqn:p_2}
\end{eqnarray}
are real scalars depending on $\psi$. The second term on the right-hand side of Equation (\ref{eqn:exp_1}) has the Taylor expansion
\begin{equation} \label{eqn:exp_2}
e^{p_2 \cos^2 (\theta_m - \theta) } = \sum_{n=0}^{\8} \frac{p_2^n \cos^{2n}(\theta_m -\theta)}{n!} = \sum_{k=0}^{\8} E_k \cos(2 k (\theta_m - \theta)),
\end{equation}
where we represent each term of $\cos^{2n}(\theta_m - \theta)$ as a summation of $\cos(2k(\theta_m - \theta))$ for non-negative even $k$. 
It follows that the Fourier series of $\psi(\theta)$ will have even cosine modes only.
\end{proof}

We now look at the integrals $\int_{0}^{2 \pi} \alpha_i \psi d \theta$ and $p_1,p_2$ for $\psi$ given by (\ref{eqn:numsol_psi_cosine}). We notice that 
\begin{eqnarray}
\int_{0}^{2 \pi} \alpha_1 \psi d \theta & = & \int_{0}^{2 \pi} \left( \frac{C_1^p + C_2^p}{2} - \frac{C_1^p - C_2^p}{2} \cos(2(\theta_m - \theta)) \right) \cdot \nonumber  \\
 & & \qquad \sum_{k=0}^{\8} T_k \cos(2k (\theta_m - \theta)) d \theta \nonumber \\
 & = & \frac{(C_1^p + C_2^p)T_0}{2} \int_0^{2 \pi} d \theta - \frac{(C_1^p - C_2^p)T_1}{2} \int_0^{2 \pi} \cos^2 (2(\theta_m - \theta)) d \theta \nonumber \\
 & = & \pi (C_1^p + C_2^p)T_0  - \frac{\pi (C_1^p - C_2^p)}{2} T_1, \\
\int_{0}^{2 \pi} \alpha_2 \psi d \theta & = & \pi (C_1^p + C_2^p)T_0  - \frac{\pi (C_2^p - C_1^p)}{2} T_1,
\end{eqnarray}
owing to the orthogonality of $\cos(2k(\theta_1 - \theta))$ to each other on $[0, 2 \pi]$. It follows that $p_1,p_2$ will be linear functions 
of $T_0,T_1$. In particular
\begin{equation} \label{eqn:p2}
p_2 = C_E(d_1- d_2) (C_2^p - C_1^p) - \pi m C_E (C_2^p - C_1^p)^2 T_1
\end{equation}
is a function of $T_1$ only. Conversely, $T_1$ is a a linear function of $p_2$:
\begin{equation} \label{eqn:T1p2}
T_1 = \frac{C_E(d_1- d_2) (C_2^p - C_1^p) - p_2}{\pi m C_E (C_2^p - C_1^p)^2}
\end{equation}
provided that $C_1^p \ne C_2^p$ otherwise a trivial stationary solution has been obtained above.

We are now at the position to present the algebraic relations for computing $\{ T_k \}$. The map (\ref{eqn:map_initial}) now appears
\begin{equation} \label{eqn:exp_3}
e^{\frac{1}{2} p_2 \cos(2(\theta_m - \theta)) } = \sum_{k=0}^{\8} T_k \cos(2k (\theta_m - \theta)) \cdot \int_0^{2 \pi} e^{\frac{1}{2} p_2 \cos(2(\theta_m - \theta)) }d \theta,
\end{equation}
where $e^{p_1 + \frac{1}{2} p_2}$ has been cancelled from both sides. We further denote the integral on the right-hand side of Equation (\ref{eqn:exp_3}) by $I_T$, which 
is a function of $p_2$. Notice that the modified Bessel function of the first kind $I_k(z)$ is given by \cite{Bessel_function}
\begin{equation} \label{eqn:Bessel_function}
I_k(z) = \frac{1}{2 \pi} \int_0^{2 \pi} e^{z \cos (\theta) } \cos(k \theta) d \theta,
\end{equation}
the left-hand side of Equation (\ref{eqn:exp_3}) will has a cosine series 
expansion \footnote{This constitues an alternative justification of the cosine series expansion in Equation (\ref{eqn:exp_2}).}
\begin{equation} \label{eqn:exp_4}
e^{\frac{1}{2} p_2 \cos(2(\theta_m - \theta)) } = \sum_{k=0}^{\8} I_k(p_2/2) \cos(2k (\theta_m - \theta)).
\end{equation}
To collect the coefficents of $\cos(2k(\theta_m - \theta))$ on both sides of Equation (\ref{eqn:exp_3}) we shall get
\begin{eqnarray}
I_0(p_2/2) & = & T_0 \cdot I_T(p_2), \\
I_1(p_2/2) & = & T_1 \cdot I_T(p_2), \label{eqn:T1} \\
I_2(p_2/2) & = & T_2 \cdot I_T(p_2), \\
& \cdots &  \nonumber \\
I_k(p_2/2) & = & T_k \cdot I_T(p_2). 
\end{eqnarray}
Indeed, examining the representation (\ref{eqn:exp_4}) we shall find that 
$$ I_T(p_2) = 2 \pi I_0(p_2/2),$$ 
and thus 
$$ T_0 = \frac{1}{2\pi}.$$
This value turns out to the exact solution of $\psi(\theta)$ for the case that the proteins have no surface orientational preference, i.e., $C_1^p = C_2^p$, 
for which $p_2 = 0$ (c.f. Equation (\ref{eqn:p2})), and thus the Fourier cosine series in Equation (\ref{eqn:exp_4}) has only one single zero order term. 
This conclusion is consistent with the analysis at the end of Section (\ref{sect:analysis}).

To solve for $T_1$ we use the following relation derived from Equations (\ref{eqn:p2}) and (\ref{eqn:T1}):
\begin{equation} \label{eqn:T1_b}
I_1(p_2/2) = 2 \frac{C_E(d_1- d_2) (C_2^p - C_1^p) - p_2}{m C_E (C_2^p - C_1^p)^2} \cdot I_0(p_2),
\end{equation}
which can be written as a fixed point map
\begin{equation} \label{eqn:mapG}
p_2 = \mathcal{G}(p_2) = a_p - \frac{b_p}{2} \cdot \frac{I_1(p_2/2)}{I_0(p_2)},
\end{equation}
with
$$ a_p = C_E(d_1- d_2) (C_2^p - C_1^p), \quad b_p = m C_E (C_2^p - C_1^p)^2.$$
The fixed point $p_2$ can be given by an intersection of the curve $I_1(p/2)/I_0(p)$ and the straight line $2a_p/b_p - 2p/b_p$, 
as illustrated in Fig. \ref{fig:mapG}. Depending on the slope $2/b_p$, there might be one, two, or at most three intersection.
When the slope is sufficiently large, only one interaction can be found, as justified by the following theorem.
\begin{theorem}
The map $\mathcal{G}(p)$ has a unique fixed point under biologically relevant conditions.  
\end{theorem}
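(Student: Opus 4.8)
The plan is to recast the fixed-point problem geometrically and then turn the slope condition visible in Fig.~\ref{fig:mapG} into a rigorous monotonicity (equivalently, contraction) argument. Writing $r(p):=I_1(p/2)/I_0(p)$, a fixed point of $\mathcal{G}$ in (\ref{eqn:mapG}) is exactly a solution of $r(p)=\tfrac{2}{b_p}(a_p-p)$, i.e.\ an intersection of the Bessel-ratio curve $r$ with the straight line $\ell(p):=\tfrac{2a_p}{b_p}-\tfrac{2p}{b_p}$. First I would record the qualitative features of $r$. Since $I_0(p)>0$ for every real $p$, the quotient $r$ is real-analytic on $\mathbb{R}$; because $I_0$ is even and $I_1$ is odd, $r$ is odd with $r(0)=0$; and the large-argument asymptotics $I_\nu(z)\sim e^{z}/\sqrt{2\pi z}$ give $r(p)\sim \sqrt{2}\,\mathrm{sgn}(p)\,e^{-|p|/2}\to 0$ as $p\to\pm\infty$. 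Thus $r$ is a bounded bump (not monotone), and its derivative $r'$ is continuous and decays exponentially, so $G_0:=\sup_{p\in\mathbb{R}}|r'(p)|$ is a finite absolute constant; the Taylor expansion $r(p)=\tfrac{p}{4}-\tfrac{7}{128}p^3+O(p^5)$ shows $r'(0)=\tfrac14$ and that $G_0$ is of this order.

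Because $b_p=mC_E(C_2^p-C_1^p)^2>0$, the line $\ell$ has negative slope $-2/b_p$, so $\ell(p)\to+\infty$ as $p\to-\infty$ and $\ell(p)\to-\infty$ as $p\to+\infty$ while $r$ stays bounded. Hence $H(p):=\ell(p)-r(p)$ is continuous with $H(-\infty)=+\infty$, $H(+\infty)=-\infty$, and possesses at least one zero, which gives existence (this also follows from the existence theorem for (\ref{eqn:smol_rotdiff_steady})). For uniqueness I would differentiate: $H'(p)=-\tfrac{2}{b_p}-r'(p)$. If the slope dominates the derivative of the bump, namely $\tfrac{2}{b_p}>G_0$, then $H'(p)\le -\tfrac{2}{b_p}+G_0<0$ for all $p$, so $H$ is strictly decreasing and its zero is unique. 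Equivalently, $|\mathcal{G}'(p)|=\tfrac{b_p}{2}|r'(p)|\le \tfrac{b_p}{2}G_0<1$, so $\mathcal{G}$ is a contraction of the compact interval into which it maps $\mathbb{R}$; either viewpoint is precisely the statement that a sufficiently steep line can meet the bump only once.

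It then remains to check that $\tfrac{2}{b_p}>G_0$ (equivalently $b_p<2/G_0$) holds in the biologically relevant regime of Table~\ref{tab:Cmax_protein}. Writing $b_p=m^2\tfrac{K}{k_BT}(C_2^p-C_1^p)^2\le \tfrac{K}{k_BT}\,(2C^p_{\max})^2$ and inserting $m\le 1$, $C^p_{\max}\le 0.1\,\mathrm{nm}^{-1}$ and $K/(k_BT)\lesssim 1.25\,\mathrm{nm}^2$, one obtains $b_p\lesssim 5\times10^{-2}$, far below $2/G_0$ (a constant of order several units), so the monotonicity/contraction condition is met with a wide margin.

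I expect the only real obstacle to be the second step: $r$ is genuinely non-monotone, which is exactly the source of the two- and three-intersection cases in Fig.~\ref{fig:mapG}, so uniqueness cannot be read off from monotonicity of $r$ itself. The crux is the quantitative control $G_0=\sup_p|r'(p)|<\infty$ on the derivative of the Bessel-function ratio; once this constant is pinned down, using the recurrences $I_0'=I_1$ and $I_1'(z)=I_0(z)-z^{-1}I_1(z)$ together with the exponential decay of $r$, both the slope inequality and the parameter verification are routine.
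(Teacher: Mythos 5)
Your proposal is correct and takes essentially the same route as the paper: writing $\mathcal{G}'(p)=-\tfrac{b_p}{2}r'(p)$ with $r(p)=I_1(p/2)/I_0(p)$, bounding $\sup_p|r'(p)|$ by its value $\tfrac14$ at $p=0$ (the paper's $d\mathcal{G}^I_{\max}=0.25$), concluding contraction whenever $b_p<8$, and verifying $b_p\ll 8$ from the tabulated parameters is exactly the paper's argument. Your additions --- the intermediate-value existence remark, the Taylor-series confirmation that $r'(0)=\tfrac14$, and your honest flag that the \emph{global} bound $\sup_p|r'|=\tfrac14$ is the crux --- are refinements rather than departures, since the paper likewise establishes that maximum only graphically via Fig.~\ref{fig:mapG}.
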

\begin{proof}
It is sufficient to show that the mapp $\mathcal{G}$ is contractive, i.e., $|\mathcal{G}'(p)| < 1$, for biologically relevant situations.
Notice that modified Bessel functions of the first kind have the following differentiation relations:
\begin{eqnarray*}
\frac{dI_k(p)}{dp} & = & \frac{k}{p} I_k(p) + I_{k+1}(p), \\ 
\frac{dI_k(p)}{dp} & = & I_{k-1}(p) - \frac{k}{p} I_k(p).
\end{eqnarray*}
It follows that 
\begin{eqnarray*}
\frac{dI_0(p)}{dp} & = & I_{1}(p), \\
\frac{dI_1(p)}{dp} & = & I_0(p) - \frac{1}{p} I_1(p), 
\end{eqnarray*}
and thus
\begin{equation} \label{eqn:dG}
\mathcal{G}'(p) = -\frac{b_p}{2} \cdot \frac{\frac{1}{2} I_0(p/2) I_0(p) - \frac{1}{p} I_1(p/2) I_0(p) - I_1(p/2) I_1(p) }{I^2_0(p)}.
\end{equation}
Denote the second quotient in Equation (\ref{eqn:dG}) by $d\mathcal{G}^I(p)$ whose maximum for $p \in \mathbb{R}$ occurs at $p=0$
where $d\mathcal{G}^I(0) = d\mathcal{G}^I_{\max} =0.25$ (c.f. Fig. \ref{fig:mapG}). To make $|\mathcal{G}'(p)|<1$ it is 
required that 
$$b_p<\frac{2}{d\mathcal{G}^I_{\max}} = 8.$$
On the other hand, under biologically relevant conditions we will have
$$ b_p = m C_E (C_2^p - C_1^p) ^2 \le 4 \frac{m^2K}{k_BT}  (C^p_{\max})^2 \le
 4 \frac{K_{\max}}{k_BT_{\min}}  (C^p_{\max})^2 \approx 0.42$$ 
for $m\le 1, K_{\max} = 5.0 \times 10^{-21}, T_{\min} =200$ and $C^p_{\max} = 0.24$ chosen
in Tab. \ref{tab:Cmax_protein} and Theorem \ref{thm:unique}. This warrants the unique 
fixed point of $\mathcal{G}$. 
\end{proof}
\begin{figure}[!ht]
\begin{center}
\includegraphics[height=4cm]{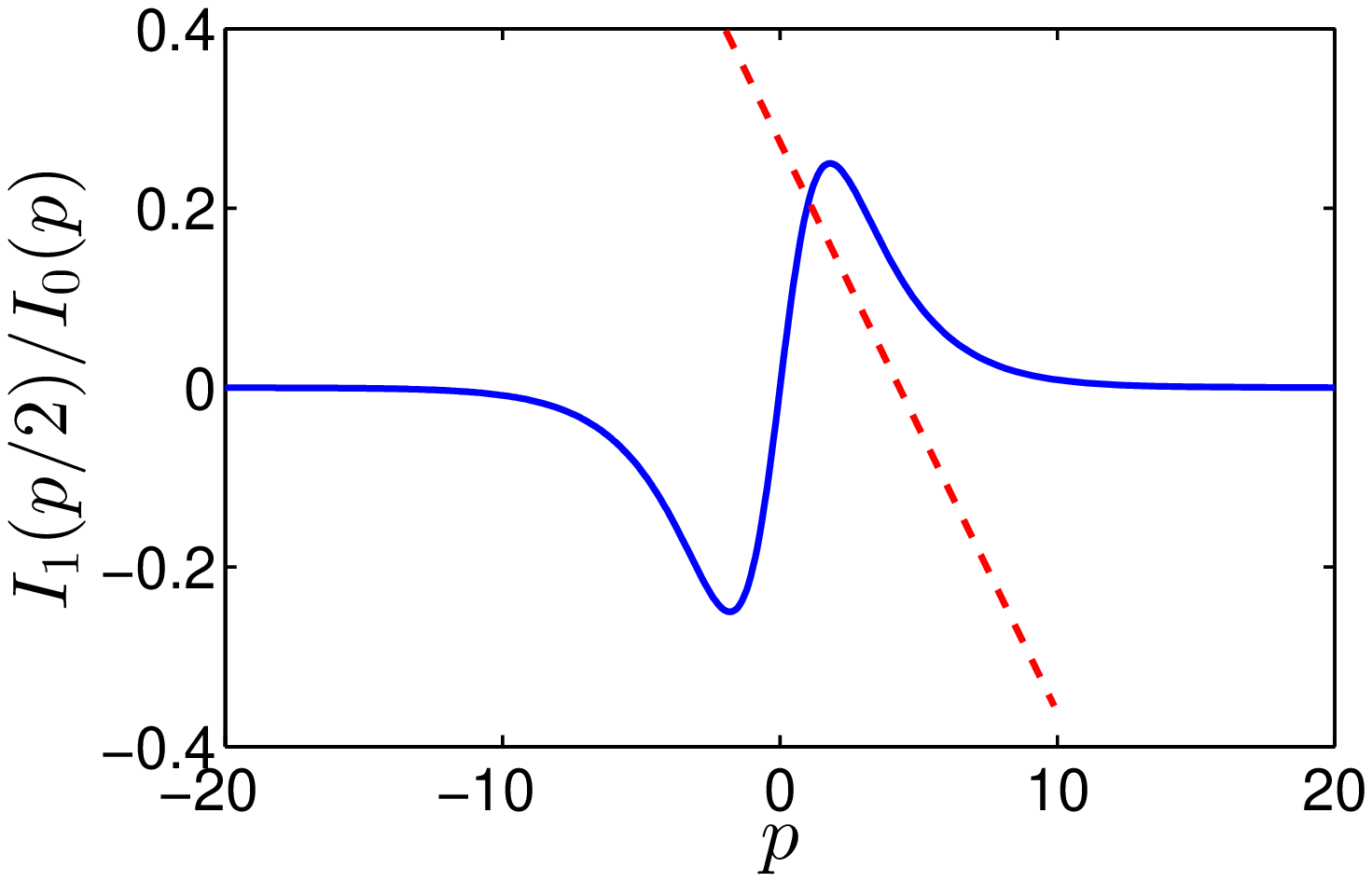}
\includegraphics[height=4cm]{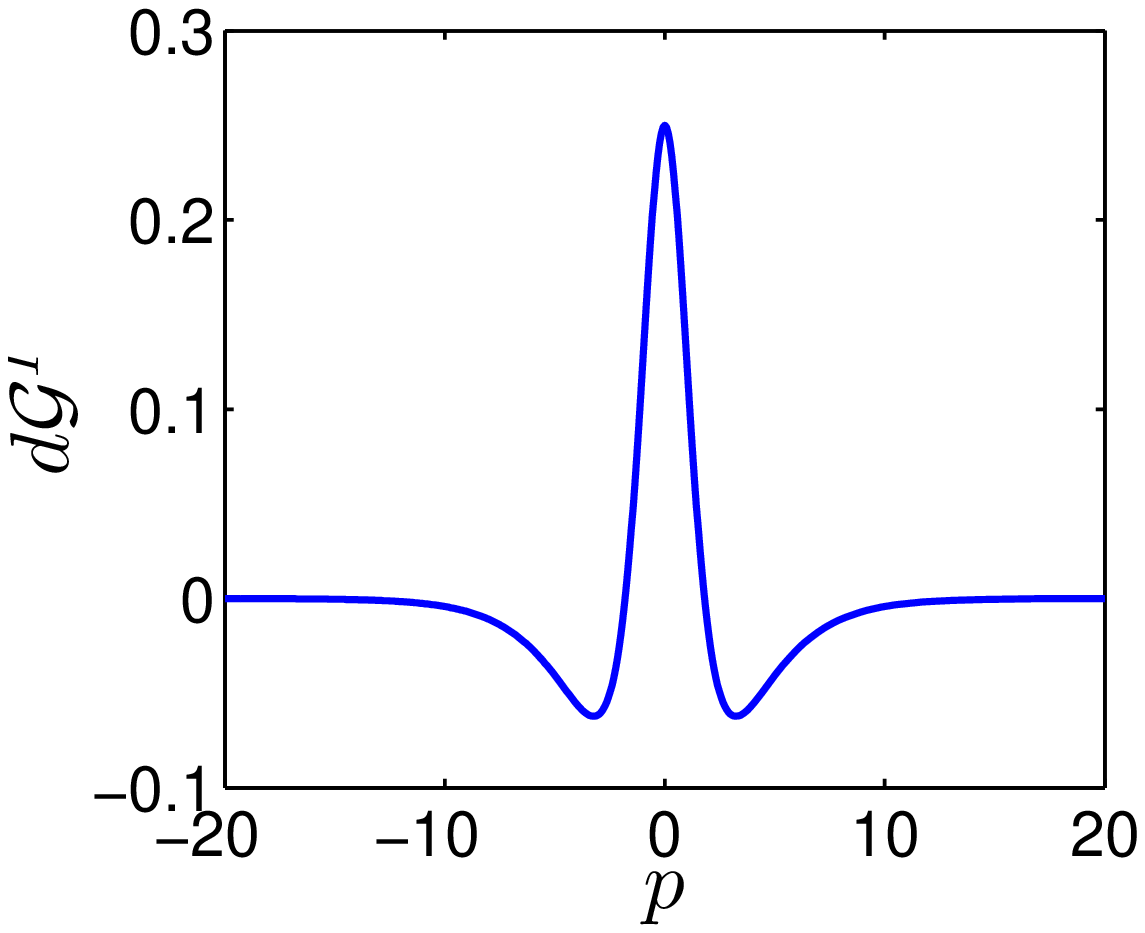}
\caption{Left: Intersection of the function $I_1(p/2)/I_0(p)$ (blue, solid) 
with $2a_p/b_p - 2p/b_p$ (red, dash) is the solution of $p_2$ defined by Equation (\ref{eqn:mapG}). Two or three intersections could 
be found, but when the slope $2/b_p$ is large enough the interaction is unique. Right: 
Function $d\mathcal{G}^I(p)$ has a maximum value of $0.25$ at $p=0$.} 
\label{fig:mapG}
\end{center}
\end{figure}

Existance of the unique fixed point of $\mathcal{G}$ allows us to use the iteration
$$ p^{(k+1)} = \mathcal{G}(p^{(k)})$$
and an arbitrary initial value $p^{(0)}$ to produce a convergent sequence $\{ p^{(k)} \}$ whose 
limit gives the solution of $p_2$ from which $T_1$ is obtained. The iteration is effcient because 
$\mathcal{G}$ is a map for one single variable. Other numerical methods such as the Newton's method
is also applicable. The remaining $T_k$ can then be uniquely determined using 
\begin{equation}
T_k = 2 \pi \cdot \frac{I_0(p_2/2)}{I_k(p_2/2)}, \quad k \ge 2.
\end{equation}
One can terminate at any integer $N_{\psi}$ to finally get an approximate solution
$$ \psi_N(\theta) = \sum_{k=0}^{N_{\psi}} T_k \cos( 2k (\theta_m - \theta)).$$
Sometime we might need the average angle of orientation $\< \theta \>$ corresponding to the distribution function $\psi(\theta)$.
It can be computed as 
\begin{eqnarray}
\< \theta \> & = & \int_0^{2 \pi} \psi_N(\theta) \theta d \theta  \nonumber \\
             & = & \pi + \sum_{k=1}^N T_k \int_0^{2 \pi} \cos(2k(\theta_m - \theta)) \theta d \theta  \nonumber \\
             & = & \pi \left( 1 - \sum_{k=1}^N \frac{T_k}{k} \sin(2k\theta_m) \right).
\end{eqnarray}

\begin{remark}
The existance of at most three intersections in the left chart of Fig. \ref{fig:mapG} indicates that the nonlinear map $\mathcal{F}$ 
defined by Equation (\ref{eqn:map_F}) has at most three fixed points.
\end{remark}

\begin{remark}
The time-dependent solution, if needed, can be obtained following combining the fixed point map $\mathcal{G}$ with the 
time-discrete, iterative variational scheme constructed for the standard Fokker-Planck equation \cite{JordanR1998a}.
\end{remark}



\bibliographystyle{siamplain}

\end{document}